\newtheorem{theorem}{Theorem}
\newtheorem{result}{Result}
\newtheorem{lemma}[theorem]{Lemma}
\newtheorem{example}[theorem]{Example}
\newcommand{\average}[1]{\left\langle #1\right\rangle}
\newcommand{\ii}{{\rm i}}
\newcommand{\e}{{\rm e}}
\newcommand{\id}{\mathbbm{1}}
\newcommand{\mc}[1]{\mathcal{#1}}
\newcommand{\bra}[1]{\langle #1|}
\newcommand{\ket}[1]{|#1\rangle}
\newcommand{\braket}[2]{\langle #1|#2\rangle}
\newcommand{\ketbra}[2]{| #1 \rangle \langle #2 |}
\newcommand{\expect}[1]{\langle #1\rangle}
\newcommand{\proj}[1]{\vert #1\rangle\!\langle#1 \vert}
\renewcommand{\d}{\mathrm{d}}
\newcommand{\Tr}{\operatorname{tr}}
\newcommand{\tr}{\Tr}
\newcommand{\ta}{{\Omega}_{\text{TA}}} %% time-average
\newcommand{\gibbs}{\Omega_{\text{Gibbs}}} %% Gibbs state
\newcommand{\gge}{\Omega_{\text{GGE}}} %% gge state
\newcommand{\argmax}{\mathrm{argmax}}
\newcommand{\fu}{Dahlem Center for Complex Quantum Systems, Freie Universit{\"a}t Berlin, 14195 Berlin, Germany}
\begin{document}
\title{Work and entropy production in generalised Gibbs ensembles}

\author{Mart\'i Perarnau-Llobet}
\affiliation{ICFO-Institut de Ciencies Fotoniques, The Barcelona Institute of Science and Technology, 08860 Castelldefels (Barcelona), Spain}
\author{Arnau Riera}
\affiliation{ICFO-Institut de Ciencies Fotoniques, The Barcelona Institute of Science and Technology, 08860 Castelldefels (Barcelona), Spain}
\author{Rodrigo Gallego}
\affiliation{\fu}
\author{Henrik Wilming}
\affiliation{\fu}
\author{Jens Eisert}
\affiliation{\fu}

\date{\today}

\begin{abstract}
Recent years have seen an enormously revived interest in the study of thermodynamic notions in the quantum regime. This applies both to the study of notions of work extraction in thermal machines in the quantum regime, as well as to questions of equilibration and thermalisation of interacting quantum many-body systems as such. In this work we bring together these two lines of research by studying work extraction in a closed system that undergoes a sequence of quenches and equilibration steps concomitant with free evolutions. In this way, we incorporate an important insight from the study of the dynamics of quantum many body systems: the evolution of closed systems is expected to be well described, for relevant observables and most times, by a suitable equilibrium state. We will consider three kinds of equilibration, namely to (i) the time averaged state, (ii) the Gibbs ensemble and (iii) the generalised Gibbs ensemble (GGE), reflecting further constants of motion in integrable models. For each effective description, we investigate notions of entropy production, the validity of the minimal work principle and properties of optimal work extraction protocols. While we keep the discussion general, much room is dedicated to the discussion of paradigmatic non-interacting fermionic quantum many-body systems, for which we identify significant differences with respect to the role of the minimal work principle. Our work not only has implications for experiments with cold atoms, but also can be viewed as suggesting a mindset for quantum thermodynamics where the  role of the external heat baths is instead played by the system itself, with its internal degrees of freedom bringing  coarse-grained observables to equilibrium.
\end{abstract}

\maketitle

Thermodynamics is undoubtedly one of the most successful physical theories, accurately
describing a vast plethora of situations and phenomena. Until not too long ago, the study of thermodynamic
state transformations was mostly confined to the realm of classical physics, which constitutes
a most meaningful approach when considering macroscopic situations. Progress on the 
precisely controlled manipulation of physical systems at the nano-scale or at the level of single atoms,
however, has pushed the frontier of the applicability of thermodynamic notions to the realm of 
quantum physics. Indeed, the emergent research field of quantum thermodynamics is concerned with 
thermodynamics in the quantum regime, a regime 
in which notions of coherence,  strong interactions, and entanglement are 
expected to play a significant role. 

Building upon a body of early work \cite{GeuSchSco67,Alicki1979}, recent attempts of 
grasping the specifics emerging in the extreme quantum regime have put particular emphasis on 
notions of thermodynamic state transformations for quantum systems. A similar focus has been put on 
studying the  rates of achievable work extraction of thermodynamic machines
\cite{ThermoReview,DelRio2011,Hilt2011,Faist2012,Egloff2012,Horodecki2013a,Brandao2013b,Aberg2013,Reeb2013,StrongCoupling,Faist2014b,Frenzel2014,AndersEditorial,Wilming2014,Gelbwaser2015}. 
In these new attempts, a resource-theoretic mindset is often applied, or single-shot notions of work extraction 
\cite{Aberg2013a,Horodecki2011} are elaborated upon.
These studies are motivated by foundational considerations----after all, such 
thermodynamic state transformations are readily available in a number of quantum architectures---as 
well as by technological desiderata: For example, novel techniques for cooling quantum systems
close to the ground state can be derived from quantum thermodynamical considerations
\cite{Brunner2013,Correa2013a}. 
In these studies of quantum heat engines, heat baths prepared in thermal states are usually 
still taken for granted: This is most manifest in a resource-theoretic language, where such 
thermal baths in Gibbs states are considered a free resource.

Concomitant with these recent studies of thermal machines, a second branch of quantum thermodynamics 
is blossoming: This is the study of quantum many-body systems out of equilibrium and the question of
thermalisation as such \cite{CalabreseCardy06,CramerEisert,Rigol2008,Linden2009,Short2012,Reimann2008,CauxEssler}. 
In this context, thermal baths are by no means assumed to be available: Instead
it is one of the key tasks of this field of research to find out under what precise conditions 
closed many-body systems are expected to thermalise, following quenches out of equilibrium.
This is hence the question in what precise sense systems---as one often says----``form their own heat bath''. Despite 
respectable progress in recent years, many questions on many-body systems out of equilibrium 
remain open, even when it comes to understanding whether non-integrable
generic systems always thermalise at all \cite{Gogolin2015}. Many-body localised systems are expected to stubbornly refuse to 
thermalise, for retaining information of the initial condition over an infinite amount of time. Integrable models, in contrast, are not equilibrating 
to Gibbs states, but to so-called generalised Gibbs ensembles 
(GGE) \cite{CramerEisert,BrodyGGE,GeneralisedGibbs,CauxEssler,0906.1663,1302.6944,1205.2211,ProsenGGE,CompleteGGE}. 
For comprehensive reviews on the subject,
see, e.g., Refs.\ \cite{1408.5148,PolkovnikovReview,Gogolin2015,AnatoliMarcosReview}.

It is the purpose of this work to bring these two realms of study closer together and to attempt to formulate a theory of quantum thermodynamics and notions of work extraction, taking into account these recent insights into the mechanism of equilibration in many-body systems. More specifically, we consider work extraction from a closed system that undergoes a sequence of quenches and relaxations to a
respective equilibrium state. Importantly, our framework deviates from the standard realm of thermodynamics, where equilibration to statistical ensembles after each quench occurs through weak coupling with an infinite thermal bath. In contrast, we incorporate the equilibration to such ensembles as an effective description of the unitary evolution of a closed system. This effective description is adequate to capture the system only for a restricted, although most relevant, set of observables. We will consider three kinds of equilibrium states: the time averaged state, the Gibbs ensemble, and the generalised Gibbs ensemble for a given set of constants of motion.   Entropy production and the minimal work principle will be studied for these three models.

The results presented here are expected to be of interest for both the study of thermal machines in the quantum regime---since new 
 insights for the equilibration of closed quantum many-body is taken into account---as well as for the study of 
 quantum many-body equilibration itself. Our work highlights the importance of investigating not only the equilibration of systems after single quenches, but also the equilibration after sequences of quenches which are the relevant paradigm within protocols of work extraction.

The structure of this work is as follows. {In Sec.\ \ref{sec:equilibrationmodels} we introduce the three models of equilibration that will be considered throughout this work and discuss its physical relevance as a description of the effective evolution of closed many-body systems. In Sec.\ \ref{Sec:OperationsAndEquilibration} we turn to presenting our framework of work extraction based on quenches and equilibrations. Sec.\ \ref{sec:entropyproduction} discusses} notions of entropy production in each of the models of equilibration, where we introduce rigorous conditions for the absence of entropy production and carefully relate
these conditions to notions of reversible processes. In Sec.\ \ref{sec:workextraction} we discuss the minimal work principle and the protocols for optimal work extraction for each of the models of equilibration. Lastly, in Sec.\ \ref{sec:freefermions} we study a model of non-interacting fermionic systems, where many of the features throughout our theoretical analysis are made concrete.

\section{Equilibration models}\label{sec:equilibrationmodels}
When referring to equilibration of quantum many-body systems, we relate to finite but large systems.
Such closed quantum many-body systems cannot truly equilibrate due to their unitary evolution. 
What is generically the case, however, is that expectation values  of large restricted sets of 
observables  equilibrate in time to the value attained for the time average
\cite{Reimann2008,Reimann2012,Short2011,CramerEisert}, in the sense that they stay close to the time average for most times
in an overwhelming majority. This is particularly true for local observables \cite{GarciaPintos2015}.

%In this section we describe the equilibration models considered in the present paper. 
%To do this we bring insights from equilibration in closed quantum systems to quantum thermodynamics.
%Strictly speaking, closed quantum systems cannot equilibrate because of their unitary evolution.
%In order for equilibration to be possible, the set of observables needs to be restricted 
%such that the amount of possible outcomes is
%much smaller than the dimension of the Hilbert space \cite{Reimann2008,Reimann2012,Short2011}.

\subsection{Time average state or diagonal ensemble}
We say that an observable $A$ equilibrates if, after some relaxation time, its expectation value is for most 
times the same  $\expect{A(t)}\simeq \Tr(A\ta)$ as the expectation value of the \emph{infinite time average}
\begin{equation}
\ta(\rho,H) := \lim_{T\rightarrow \infty}\frac{1}{T}\int_{0}^T \e^{-\ii Ht}\,\rho\, \e^{\ii Ht}\mathrm{d}t\, ,
\label{eq:time-average}
\end{equation}
of an initial state $\rho$ of a system described by a Hamiltonian $H$.
A simple calculation shows that the time averaged state 
corresponds to the de-phased state in the Hamiltonian eigenbasis and for this reason is often called \emph{diagonal ensemble}.
More explicitly, given the distinct energies of the Hamiltonian $\{E_k\}$ 
and the projectors onto their corresponding eigenspaces $P_k$, 
the time averaged state reads
\begin{equation}
\ta(\rho,H) = \sum_k P_k\rho P_k\, .
\end{equation}
The time averaged state corresponds to the maximum entropy state given all the conserved quantities  \cite{Gogolin2011}. 
This observation turns the \emph{principle of maximum entropy} introduced by Jaynes \cite{Jaynes1957,Jaynes1957a} into a consequence of the quantum dynamics. The principle of maximum entropy states that the probability distribution which best represents the current state of knowledge of the system is the one with largest entropy given the conserved quantities of the system; this principle will be crucial to define our equilibration models.

Although relaxation towards the time averaged state has been proven under very general and naturally fulfilled conditions \cite{Reimann2008,Linden2009,Short2012,Reimann2012}, in practice, the diagonal ensemble cannot be used as an equilibration model due to its inefficiency. 
The description of the equilibrium state by the diagonal ensemble requires the specification of as many conserved quantities as the dimension of the Hilbert space, which scales exponentially in the system size. It is therefore in principle not even possible to save all the data in a computer for a large interacting many-body system, let alone compute the infinite time average efficiently.

\subsection{Canonical or Gibbs ensemble}
In practice, the characterisation of the equilibrium state can in many instances be done by specifying only a few 
quantities, e.g., the temperature and the chemical potential. The most relevant and common such situation 
is the \emph{canonical ensemble} or the \emph{Gibbs state}, for which only the temperature,
or equivalently the energy per particle of the initial state $\rho$, has to be specified,
\begin{equation}\label{eq:Gibbs-def}
\Omega_{\text{Gibbs}}(\rho,H)=\frac{\e^{-\beta H}}{Z}\, ,
\end{equation}
where {$\rho$ is the state of the system before undergoing the equilibration process,} $Z=\Tr(\e^{-\beta H})$ is the partition function and 
the inverse temperature $\beta>0$ is fixed by imposing that $\Tr(H\gibbs)=\Tr(H\rho)$. 

For generic, non-integrable models, the thermal state is expected to be indistinguishable from the time averaged state
under very mild assumptions which relate to conditions on eigenstates of the Hamiltonian \cite{Srednicki1994,Rigol2008,Gogolin2015}
and on the energy distribution of the initial state \cite{Riera2012,Brandao2015b}. While dynamical
thermalisation in this sense has not yet been rigorously proven, it is highly plausible, and it can be connected to 
typicality arguments \cite{Popescu2006,Goldstein2006}. 
The generality of these conditions explains why the canonical ensemble 
is the corner-stone of the standard thermodynamics.
Nevertheless, there are known instances of systems that do not thermalise.
One central \emph{aim of this work is to study how thermodynamic protocols
 are modified when the Gibbs ensemble 
is not a good equilibration model and does not satisfactorily describe the equilibrium state of the system}.

\subsection{Generalised Gibbs ensemble}
Examples of systems which do not fully thermalise to Gibbs states are constituted by \emph{integrable systems}.
The infinite-time averaged states are not well described by the Gibbs ensemble because of the existence of (quasi) local integrals of motion, i.e. 
conserved quantities 
$Q_i$, that retain information about the initial state over an infinite amount of time. 
Instead, there is strong evidence that they can be well-described by the so-called generalised Gibbs ensemble (GGE)
defined as
\begin{equation}\label{eq:defGGE}
\gge(\rho,H,\{Q_i\})\propto 
\e^{-\beta H + \sum_{j=1}^{q} \lambda_j Q_j }\, ,
\end{equation}
where the generalised chemical potential $\lambda_j$ is a Lagrange multiplier 
associated with the specific conserved quantity $Q_j$, $j=1,\dots, q$, such that its expectation value is the same as 
the one of the initial state
\begin{equation}
\Tr\left(\gge (\rho,H,\{Q_j\}) \ Q_k \right)=\Tr(\rho Q_k)\, ,
\end{equation}
for each $k=1,\dots, q$.
The GGE can be understood as an interpolation between the diagonal and the canonical ensembles.
The diagonal ensemble maximises the von Neumann entropy $S(\rho)=-\Tr(\rho \log \rho)$
given all the conserved quantities (CQ).
The Gibbs ensemble maximises the von Neumann entropy considering only the
energy as a conserved quantity.
The GGE is situated in between. For a given state $\rho$ and a set of operators (conserved quantities)
$\{Q_i\}$, it is natural to define the set of states compatible with the values the conserved quantities 
\begin{equation}
\mathcal{E}(\rho,\{Q_i\}) := \{\sigma | \tr(\rho Q_i) = \tr(\sigma Q_i)\}.
\end{equation}
The GGE is the state that maximises the von Neumann entropy within $\mathcal{E}(\rho,\{Q_i\})$.
From this perspective, the ensembles introduced so far can be summarised as
\begin{align}\label{modelseq}
\ta&:=\argmax_{\sigma\in\mathcal{E}(\rho,\{\textrm{all CQ}\})} S(\sigma)\, ,   \\
\label{modelseq2}\gge(\rho,H,\{Q_i\}) &:= \argmax_{\sigma\in\mathcal{E}(\rho,\{H,Q_i\})} S(\sigma)\, ,  \\
\label{modelseq3}\gibbs(\rho,H)&:=\argmax_{\sigma\in\mathcal{E}(\rho,\{H\})} S(\sigma) \, .
\end{align}
%Let us note that the entropy of the time averaged state has been introduced in ref.\ \  \cite{Polkovnikov2011486} as the \emph{diagonal entropy}, where 
%this quantity is argued to extend the thermodynamic entropy in out-of-equilibrium quantum systems.
%\rg{A paragraph about the diagonal entropy has been removed here. To me, it played no role}.

A relevant question in the construction of GGEs is how the conserved quantities have to be chosen, which is discussed in Appendix \ref{sec:appGGE}. 
In general, there is a certain degree of ambiguity of what constants of motion to pick in order to arrive at the appropriate
GGE. This discussion is not relevant for the general study pursued in this work, however.
It is the aim of this work to study the thermodynamical behaviour of the GGE in full generality, 
hence we will not have to make any precise assumption about the conserved quantities, unless it is explicitly specified.

\subsection{Example: Equilibration of a quadratic fermionic model}
To illustrate the above considerations, let us consider a quadratic Hamiltonian of fermions in a one dimensional lattice
\begin{equation}\label{eq:Hfermchain0}
H^{(0)}= \sum_{i=1}^{n}  \epsilon_i a_i^{\dagger} a_i +  g \sum_{i=1}^{n-1}  \left(a_i^{\dagger} a_{i+1}+a_{i+1}^{\dagger} a_{i}\right),
\end{equation}
where $n$ is the total number of sites and  $a_i$ ($a_i^{\dagger}$) are the creation (annihilation) operators at the 
$i$-site which satisfy the fermionic 
anti-commutation relations 
\begin{equation}
\{a_i,a_j^{\dagger}\}=\delta_{i,j},\,
\{a_i,a_j\}=\{a_i^{\dagger},a_j^{\dagger}\}=0.
\end{equation}
We would like to study how an initially out of equilibrium state relaxes to equilibrium
and see that the Gibbs ensemble fails to describe the equilibrium state.

The initial state of the system is taken to be in thermal equilibrium,
$\rho^{(0)} = e^{-\beta H^{(0)}}/\mathcal{Z}$. A quench is then performed to a new Hamiltonian $H^{(1)}$,
\begin{equation}
	H^{(0)} \mapsto H^{(1)}, 
\end{equation}	
in which the energy of the first fermion is modified, $H^{(1)}=H^{(0)}+\Delta a_1^{\dagger} a_1$. 
%where subsystem of interest $S$ is considered to be the first site of the lattice
%and the Hamiltonian has been decomposed into 3 pieces: 
%$H_S=\epsilon_1 a^{\dagger}_1 a_1$ the Hamiltonian of the first site of the lattice, $V=g (a_1^{\dagger} a_{2}+a_{2}^{\dagger} a_{1})$ the interaction with its environment, and $H_B=H-V-H_S$ the Hamiltonian of the environment.
After the quench, the population of the first fermion evolves in time $t>0$ as
\begin{equation}
n_1(t)=\tr(a_1^\dagger a_1\rho(t))
\end{equation}
with $\rho(t)=\e^{-\ii H^{(1)}t}\,\rho(0)\, \e^{\ii H^{(1)}t}$.
As the Hamiltonian is quadratic, it is a problem involving free fermions and can be 
numerically simulated for very long times and system sizes (see Appendix \ref{FreeFermionsI}).

In Fig.~\ref{EquilibrationFreeFermions}, we plot the time evolution of the occupation of the first site
$n_1(t)$. 
As expected, we see that after some relaxation time $t$, $n_1(t)$ equilibrates to the value predicted by
the GGE---which is relatively far from the one given by the Gibbs equilibration model. The situation described in this example, 
a quench and the characterisation of the equilibrium state, is extensively studied in the literature, see for a recent 
review ref.\ \ \cite{Gogolin2015}. 
In order to study thermodynamic processes in which many quenches and equilibrations
are performed, it will be necessary to promote the suitability of 
effective descriptions in terms of GGE states for equilibration processes beyond a single quench.

\begin{figure}
  \includegraphics[width=1.05\columnwidth]{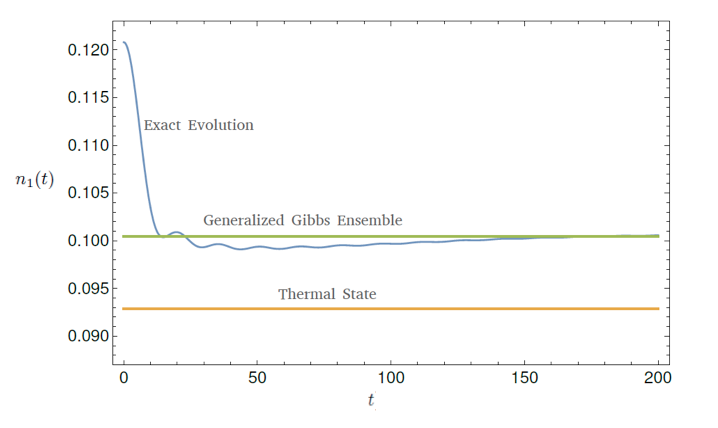}
   \caption{Time evolution of the occupation of the first site of the lattice $n_1=a_1^\dagger a_1$
   for a quadratic Hamiltonian of $n$ fermions in a one dimensional lattice. For the example we take 
   $n=100$, $\epsilon_i=1$, $\Delta=0.15$, $\beta=2$, $g=0.1$ and time is measured in units of $1/(10g)$.
  % The initial state is given by $\rho_S \otimes e^{-\beta H_B}/\mathcal{Z}$ with $\rho_S$ a random state of the first site of the lattice.   
  An equilibration around the GGE is observed, even for this moderately sized quantum system.}
  
\label{EquilibrationFreeFermions}
\end{figure}

\section{Framework for thermodynamic protocols}
\label{Sec:OperationsAndEquilibration}

In the previous section we have introduced the different equilibration models, given by equations \eqref{modelseq}-\eqref{modelseq3}, that describe the equilibrium state that is reached when a system initially out of equilibrium in a state $\rho$ evolves under a Hamiltonian $H$. 
One way to bring a system out of equilibrium is to quench its Hamiltonian.
More explicitly, a system initially at equilibrium with initial Hamiltonian $H^{(\text{ini})}$ undergoes a quench $H^{(\text{ini})} \mapsto H^{(\text{fin})}$ and starts to evolve non-trivially under 
the new Hamiltonian $H^{\text{(fin)}}$.
The models of equilibration introduced above can be used to describe the new equilibrium state 
that is reached after a single quench and a posterior sufficiently long time evolution under $H^{\text{(fin)}}$. 
%
%In this case, the equilibrium state will be described by 
%\begin{align}\nonumber
%&\ta(\rho,H^{\text{(fin)}}):=\argmax_{\sigma\in\mathcal{E}(\rho,\textrm{all CQ of }\{H^{\text{(fin)}}_k\})} S(\sigma)\, ,   \\
%\nonumber&\gge(\rho,H^{\text{(fin)}},\{Q^{\text{(fin)}}_i\}) := \argmax_{\sigma\in\mathcal{E}(\rho,\{H,Q^{\text{(fin)}}_i\})} S(\sigma)\, ,  \\
%\nonumber&\gibbs(\rho,H^{\text{(fin)}}):=\argmax_{\sigma\in\mathcal{E}(\rho,\{H^{\text{(fin)}}\})} S(\sigma) \, .
%\end{align}
%\ar{[[I would remove the above equation, it is a copy and paste of (7-9).]]}
However, thermodynamic processes (for instance a protocol of work extraction) often involve a series of quenches and equilibrations. We  now extend our previous considerations to such processes involving sequences of quenches and equilibrations.

\subsection{Equilibration under repeated quenches}\label{sec:equilibrationmanyquenches}
Consider a sequence of changes of the Hamiltonian, as defined by a list of $N+1$ Hamiltonians, $H^{(m)}$, where $m=0,1,\ldots,N$ denotes the step in the protocol and $H^{(0)}$ is the initial Hamiltonian. These Hamiltonian transformations $H^{(m-1)}\mapsto H^{(m)}$ are considered to be \emph{quenches}, 
in the sense that they are performed sufficiently fast such that the state of the system $\rho$ is unchanged. Let us denote the time at which the quench $H^{(m-1)}\mapsto H^{(m)}$ is performed by $t_m$ with $t_m<t_{m+1}$ for all $m$. After a quench, the system evolves under the Hamiltonian $H^{(m)}$ for a time $t_{m+1}-t_m$ until a new quench $H^{(m)}\mapsto H^{(m+1)}$ is performed at time $t_{m+1}$. This time interval is taken to be much longer than the equilibration time such that the system can be considered to be in equilibrium. 
The exact state of the system $\rho(t)$ when $m$ quenches have taken place ($t_m<t<t_{m+1}$) is given by, 
\begin{equation}\label{eq:exactevolution}
 \rho(t)=  \e^{-\ii (t-t_m) H^{(m)}} \rho(t_m) \e^{\ii (t-t_m) H^{(m)}},
\end{equation}
where $\rho(t_m)$ is the state of the system at $t=t_m$ when the 
Hamiltonian $H^{(m)}$ starts to dictate the evolution.
The state $\rho(t_m)$ is given by the recursive expression
\begin{equation}\label{eq:exactevolution2}
 \rho(t_k)=  \e^{-\ii (t_{k}-t_{k-1}) H^{(k-1)}} \rho(t_{k-1})  
 \e^{\ii (t_{k}-t_{k-1}) H^{(k-1)}}\, ,
\end{equation}
with $\rho(t_0)$ the initial state and $k=1,\ldots,m$.

Now, our aim is to construct an effective description of the whole evolution of $\rho$, in such a way that the state after the $m$-th quench and its posterior equilibration, $\rho(t)$,  can be described by an appropriate equilibrium state. 
We denote such equilibrium state that approximates the real state after $m$ quenches,  $ \rho(t)$, 
as $\omega^{(m)}_{(\cdots)}$ where $(\cdots)$ is the place holder for one of the three models of equilibration: time-average (TA), GGE or Gibbs. 
The effective description of \eqref{eq:exactevolution} is then built in a recursive way as follows, 
\begin{align}\nonumber
\omega^{(m)}_{\text{TA}} &= \ta\big(\omega^{(m-1)}_{\text{TA}},H^{(m)}\big),\\
\label{eq:equilibrationmanyquenches}
\omega^{(m)}_{\text{GGE}}&=\gge\big(\omega_{\text{GGE}}^{(m-1)},H^{(m)},\{Q^{(m)}_i\}\big) ,  \\
\nonumber\omega_{\text{Gibbs}}^{(m)}&=\gibbs\big(\omega_{\text{Gibbs}}^{(m-1)},
H^{(m)}\big).
\end{align}
Here, $\omega_{(\cdots)}^{(0)}=\rho(t_0)$ is the intial state, before any quench or evolution has taken place. 
 Note that, when constructing the GGE description, the set of conserved quantities $\{Q^{(m)}_i\}$ changes for every Hamiltonian $H^{(m)}$, as well as the Lagrange multipliers $\{\lambda_j^{(m)}\}_{j=1}^{q}$, or simply the inverse temperature $\beta^{(m)}$ in the case of equilibration to the Gibbs ensemble.

In order to provide a motivation and interpretation of eq.\ \eqref{eq:equilibrationmanyquenches}, together with the implicit assumptions that come into play, let us illustrate it with a simple example. Suppose a system initially in state $\rho(0)$ and with Hamiltonian $H^{(0)}$. At time $t_1$, we perform a first quench $H^{(0)} \mapsto H^{(1)}$ and let the system evolve under $H^{(1)}$; at time $t_2$ we perform second quench $H^{(1)} \mapsto H^{(2)}$ and let the system evolve under $H^{(2)}$
until it equilibrates at time $t$. For both evolutions, we now consider effective descriptions in terms of GGE states.
After the evolution under $H^{(1)}$ and immediately before performing the second quench, 
the system is exactly described by $\rho(t_2)$ as given by eq.\ \eqref{eq:exactevolution2}. 
For a set of conserved quantities $\{Q^{\mathrm{(1)}}_i\}$, 
the corresponding GGE equilibrium state is given by,
\begin{equation}\label{eq:GGEex}
\omega^{(1)}_{\text{GGE}}=\gge\big(\rho(t_1),H^{\text{(1)}},\{Q^{\text{(1)}}_i\}\big) \simeq \rho(t_2),
\end{equation}
where the symbol ``$\simeq$'' means in this context that the average value of relevant observables is well approximated by $\omega^{(1)}_{\text{GGE}}$, that is
\begin{equation}\label{eq:accurategeneral}
\tr(A \rho(t_2)) \simeq \tr(A \omega^{(1)}_{\text{GGE}}).
\end{equation}
Now, when describing the equilibrium state after the second quench, one can simply apply the same recipe. That is, the state $\rho(t^{(1)})$ is the initial state when the evolution under $H^{(2)}$ starts. Then, assuming that the new conserved quantities $\{Q^{(2)}_i\}_i$ are chosen appropriately and applying the same reasoning one obtains an approximation by taking
\begin{equation}
\gge\big(\rho(t_2),H^{\text{(2)}},\{Q^{\text{(2)}}_i\}\big) \simeq \rho(t),
\end{equation}
with $t$ longer than the $t_2$ plus the subsequent equilibration time.
Importantly, note that this effective description is not efficient, in the sense that it requires keeping track of the exact state $\rho(t_2)$ to obtain the equilibrium state at time $t$. If this is extended to $N$ quenches, having to keep track of the exact evolution until the $(N-1)$-th quench is as demanding as keeping track of the whole exact evolution over the process. It is here when the effective description \eqref{eq:equilibrationmanyquenches} becomes handy, as it can be constructed by keeping track of the value of the conserved quantities only. First of all, coming back to the first evolution, note that by applying  \eqref{eq:equilibrationmanyquenches} with $m=1$ we recover \eqref{eq:GGEex}, i.e., the standard result for single quenches. Now, in order to construct the GGE state corresponding to $\rho(t)$, we assume that the conserved quantities $\{Q_i^{(2)}\}$ are within the set of physically relevant observables $A$ in \eqref{eq:accurategeneral}. That is, we assume that 
\begin{equation}\label{eq:accurateconserved}
\tr(Q_i^{(2)} \rho(t_2)) \simeq \tr(Q_i^{(2)} \omega^{(1)}_{\text{GGE}})
\end{equation}
for all $i$. In this way, in order to obtain the equilibrium GGE ensemble after the second quench, it is not necessary to keep track of the exact state $\rho(t_2)$, but one can simply use $\omega_{\text{GGE}}^{(1)}$ instead. Using \eqref{eq:accurateconserved} we then obtain, 
\begin{eqnarray}
\nonumber\omega_{\text{GGE}}^{(2)} 
\nonumber&:=&\gge\big(\omega_{\text{GGE}}^{(1)},H^{\text{(2)}},\{Q^{\text{(2)}}_i\}\big)\\ &\simeq& \gge\big(\rho(t_2),H^{\text{(2)}},\{Q^{\text{(2)}}_i\}\big)\\
&\simeq& \rho(t).
\end{eqnarray}
Extending the same reasoning to the case of $N$ quenches and other models of equilibration other than the GGE, we arrive to an effective description of the form \eqref{eq:equilibrationmanyquenches}.

%\ma{ Summarizing, \eqref{eq:equilibrationmanyquenches} provides an efficient way to construct a description in terms of GGE states of the evolution of a quantum state under a concatenation of quenches and equilibration processes. The GGE state after the $m$th equilibration process, $\omega^{(m)}$,  is determined by the relation,
%\begin{equation}\label{eq:paralleltransport}
%\tr\left(\omega^{(m)} Q_i^{(m)}\right) = \tr\left(\omega^{(m-1)} Q_i^{(m)}%\right),
%\end{equation}
%for all $i=1,\ldots,m$. Here the $Q_i^{(m)}$ correspond to the conserved quantities of $H^{(m)}$, and eq.\ \eqref{eq:paralleltransport} determines the corresponding Lagrange multipliers $\lambda_i^{(m)}$ in \eqref{eq:defGGE}. Note that the Lagrange-multipliers can become infinite, for example in the case of a ground state described as a Gibbs state. It is important to stress that these Lagrange parameters will not be constant along a protocol.
%} \ro{[[The notation in the previous paragraph is not fully consistent, but I think i would remove it anyway]]}

In the rest of this work we will always use the effective description \eqref{eq:equilibrationmanyquenches} for the full process consisting on a sequence of quenches and equilibrations. We do not claim by this that this model will accurately describe the real dynamics of any system or protocol, and indeed we explicitly leave here as an open question to identify for which Hamiltonians and conserved quantities  condition \eqref{eq:accurateconserved} is satisfied for each quench. Nonetheless, and in exactly the same way as equilibration to the Gibbs state is assumed in the usual scenario in thermodynamics, we will assume that equilibration to statistical ensembles of the form \eqref{eq:equilibrationmanyquenches} occurs over any protocol, so that we can tackle questions about entropy production and work extraction.

To examine the validity of our model, we will later provide a numerical comparison of the real exact evolution and the model of eq.\ \eqref{eq:equilibrationmanyquenches} for the case of free fermions. We will see for this example that the model predicts with great accuracy the amount of work that is extracted in a protocol involving a sequence of quenches.

\subsection{Work cost of quenches}
Concatenations of quenches and equilibrations constitute a framework to describe thermodynamic processes -see, e.g., Refs. \cite{Aberg2013a,Anders2013,StrongCoupling}. Within this framework, work is associated with the  input energy under quenches, whereas heat is associated with the exchange of energy under equilibration processes. At the level of average quantities, the work cost of a single quench, $H^{(m-1)}\mapsto H^{(m)}$, reads
\begin{equation}\label{eq:abovequantity}
W^{(m)} := \tr \left( \rho(t_m)(H^{(m)}-H^{(m-1)})\right)\, ,
\end{equation}
where $\rho(t_m)$ is given in \eqref{eq:exactevolution}.
The main assumption of this study is precisely that the work cost of a quench is very well approximated by the effective description of the equilibrium state, i.~e.
\begin{equation}
W^{(m)} = \tr  \left(\omega^{(m-1)}_{(\cdots)}(H^{(m)}-H^{(m-1)})\right)\, ,
\end{equation}
where $\omega^{(m-1)}_{(\cdots)}$ is its effective description \eqref{eq:equilibrationmanyquenches}. 
While we focus our attention on average quantities, primarily for simplicity of the analysis, one could also conceive a study of work extraction  under GGE for other work quantifiers \cite{HorodeckiOppenheim2013,Aberg2013,Gallego2015}.
%Note that we choose the average energy in our definition of work primarily for simplicity of the analysis. Nonetheless, one could also conceive a study of work extraction  under GGE for other work quantifiers \cite{HorodeckiOppenheim2013,Aberg2013,Gallego2015}.
As the equilibration processes happen spontaneously and have no work cost, 
the total work extracted in the entire protocol is simply given by the sum of the steps
\begin{equation}\label{eq:TotalWdef}
W := \sum_{m=1}^N W^{(m)}.
\end{equation}
%where the state $\rho^{(m)}$ implicitly depends on the initial state $\rho^{(0)}$ and the previous Hamiltonians $H^{(m')}$ with $m'<m$.

\subsection{The system - bath setting beyond the weak coupling and infinite bath limits}
A particularly relevant scenario is the system - bath setting.
We call system $S$ to the part of the total system upon which
one has control and it is possible to quench its Hamiltonian $H_S$.
The bath $B$ contains the degrees of freedom upon one has no control
and it is the responsible for equilibrating the system $S$. 
In order for this equilibration to happen, 
the dimension of the Hilbert-space of $S$, $\dim(\mc{H}_{S})$, 
is considered to be much smaller than that of the bath, 
\begin{equation}
\dim(\mc{H}_{S})\ll \dim(\mc{H}_{B})
\end{equation}
and the total Hamiltonian to be of the form \cite{footnoteproduct},
\begin{equation}
H^{(m)} = H_{S}^{(m)}\otimes \id_{B} + \id_{S}\otimes H_{B} + V\, ,
\label{eq:HSB0}
\end{equation}
where the interaction $V$ is supported on $S$ and $B$ and couples the two subsystems. 
Unlike the standard assumptions in thermodynamics, 
note that we do not assume that the interaction $V$ is weak or that bath size is infinite. 
Let us be more explicit about what we mean by that.

Usually, within thermodynamics, it is assumed that the system $S$ equilibrates, upon contact with a bath $B$, according to 
\begin{equation}\label{eq:thermalequilibration}
\tr_B\left( \omega_{\beta}^{(m)}\right)=\Omega_{\beta}(H_S^{(m)}):=\frac{e^{-\beta H_S^{(m)}}}{Z},
\end{equation}
where $\beta>0$ is \emph{fixed} throughout all the protocol. 
In contrast, in the model that we  consider, given by $\omega_{\text{Gibbs}}^{(m)}$ in \eqref{eq:equilibrationmanyquenches}, the inverse temperature changes along the protocol and the Gibbs states describe the whole compound $SB$. 
Nonetheless, let us note that the model of equilibration $\Omega_{\beta}$ in \eqref{eq:thermalequilibration} represents a particular case of our Gibbsian model $\Omega_{\text{Gibbs}}$ in the limit of weak coupling and infinite bath.
In \emph{the limit of an infinite bath}, the total energy of $SB$ in \eqref{eq:equilibrationmanyquenches} will not be substantially affected by the energy pumped or subtracted in all the quenches $H_{SB}^{(m)} \mapsto H_{SB}^{(m+1)}$ and the parameter $\beta^{(m)}$ will remain constant throughout the protocol, $\beta^{(m)}\approx \beta$ for all $m$. 
In \emph{the limit of weak coupling} $V$ between $S$ and $B$, then $\Omega_{\beta^{(m)}}(H^{(m)}_{SB})\approx \Omega_{\beta^{(m)}}(H^{(m)}_{S}) \otimes \Omega_{\beta^{(m)}}(H^{(m)}_{B})$. 

In sum, the model of equilibration $\omega_{\text{Gibbs}}^{(m)}$ should be regarded as a correction to the usual setup in thermodynamics given by eq.\ \eqref{eq:thermalequilibration}. This correction incorporates the fact that the bath is of finite size, which introduces a dependence of the inverse temperature $\beta^{(m)}$ and also allows for strong couplings between $S$ and $B$.

\section{Entropy production and reversible processes}\label{sec:entropyproduction}
%\he{\small \it [[removed subsection title here.]]}

An important quantity in thermodynamic processes is the entropy production on system and bath during the protocol. Of course, the exact unitary 
dynamics on $SB$ does not change the von~Neumann entropy 
in the system. However, we are using an effective description on $SB$, given by \eqref{eq:equilibrationmanyquenches},
and in this effective description the entropy in the system $SB$ might well change. Indeed, 
due to the fact the equilibration models can all be understood as a maximisation of the entropy given some constraints, it follows that the entropy of the states $\omega^{(m)}$ in \eqref{eq:equilibrationmanyquenches} is non-decreasing during a protocol
\begin{equation}
S(\omega^{(m)}) \geq S(\omega^{(m-1)})\quad \forall \ \  m=1,\ldots,N. 
\end{equation}
where $S$ is the von Neumann entropy defined as
\begin{equation}
S(\rho)=-\tr(\rho \log \rho).
\end{equation}
Therefore, sequences of quenches followed by equilibrations are in general irreversible: if we start with the final state of the protocol and then run the protocol backwards, we will in general not end up with the original initial state.

From phenomenological thermodynamics we would expect that the protocols become reversible if they are done in a quasi-static way. In the context of our set of operations, a quasi-static process is defined by considering $N\rightarrow \infty$ quenches $H^{(m)} \mapsto H^{(m+1)}$ such that $H^{(m+1)}-H^{(m)}$ is of order $1/N$, followed each by an equilibration process as given by eq.\ \eqref{eq:equilibrationmanyquenches}. In this limit of an infinite number of quenches we can simply describe the quasi-static process by defining the continuous path of Hamiltonians as $u\mapsto H(u)$ with $u\in[0,1]$. This corresponds to the Hamiltonian $H^{(m)} = H(u=m/N)$, and equivalently for the equilibrium state $\omega(u=m/N) := \omega^{(m)}$ in the limit of $N \rightarrow \infty$ (where $\omega$ is an effective description of the equilibirum state given by TA, GGE or Gibbs). We will be concerned with the von Neumann entropy of the equilibrium state along the trajectory
\begin{equation}
S(u)= - \tr\big( \omega(u) \log \omega (u)\big).
\end{equation}

We now discuss in detail under which conditions the entropy remains constant over the quasi-static process, i.e. $S(1)=S(0)$,  for the three models of equilibration. 
Importantly, note that we are concerned with the entropy production in a given quasi-static process. Hence, as the quasi-static process requires an arbitrarily large number of quenches and subsequent equilibrations, it is by definition an arbitrarily slow process. We will see that the fact that the process is arbitrarily slow alone (by definition as it is a quasi-static process) does not guarantee that there is no entropy production. 

\subsection{Entropy production for time averaged ensembles}

We start by analysing the entropy production of a quasi-static process when all conserved quantities are taken into account. In this case the equilibrium state is given by $\omega_{\text{TA}}(u)$. Our first result shows that there is no entropy production in a quasi-static process if the trajectory of Hamiltonians $u\mapsto H(u)$ is smooth. 

\begin{result}[Absence of entropy production within the TA model] \label{res:entropyproductiontimeaverage} Consider a Hamiltonian trajectory $u\mapsto H(u)=\sum_k E_k(u) \proj{E_k(u)}$, and a quasi-static process along this trajectory which induces a family of time-average states $\omega_{\rm TA}(u)$.  Then, if the trajectory is continuous and the eigenvectors $\ket{E_k(u)}$ are differentiable, there is no entropy production in such a quasi-static process, that is, $S(0)=S(1)=0$.
\end{result}

The proof and discussion can be found in Appendix \ref{sec:reversible_processes}. 
Note that this result is independent of the state which is evolving under $H(u)$. In fact, for a given state, there exist quenches that are not quasi-static but preserve its entropy, such as any quench to a Hamiltonian with the same eigenbasis as the state. This is for instance the case of raising and lowering energy levels.

\subsection{Entropy production for generalised Gibbs ensembles}

Now, we consider the case of a generic GGE equilibration where not all the conserved quantities are taken into account.
In this case, the equilibration model \eqref{eq:equilibrationmanyquenches} satisfies the relation,
\begin{equation} \label{eq:paralleltransport}
\tr \left(\omega^{(m)}_{\text{GGE}} Q_i^{(m)}\right) = \tr \left(\omega^{(m-1)}_{\text{GGE}} Q_i^{(m)}\right),
\end{equation}
for all $i=1,\ldots,q$. Here the $\{Q_i^{(m)}\}$ correspond to the $q$ conserved quantities of $H^{(m)}$, 
and eq.\ \eqref{eq:paralleltransport} determines the corresponding Lagrange multipliers $\lambda_i^{(m)}$ 
in \eqref{eq:defGGE}. 
For such equilibrium states,  we also identify conditions so that there is no entropy production. More precisely, we find the following:

\begin{result}[Absence of entropy production within the GGE model]\label{res:entropyproductionGGE} Consider a quasi-static process along a Hamiltonian trajectory $u\rightarrow H(u)$ described by a family of equilibrium states $\omega_{\rm GGE}(u)$ . Then, the entropy of $\omega_{\rm GGE}(u)$ is preserved in such a quasi-static process, provided that  the Lagrange-multipliers as determined by $\eqref{eq:paralleltransport}$, form in the limit $N\rightarrow \infty$ a set of smooth functions $u\mapsto \lambda_j(u)$ for $j=1,\ldots,q$.
\end{result}

This result is shown simply by taking the continuum limit of eq.\ \eqref{eq:paralleltransport} which yields
\begin{equation}\label{eq:paralleltransportcontinuum}
\tr\left(\frac{\d \omega_{\text{GGE}}(u)}{\d u}Q_j(u)\right) = 0,\quad \forall j=1,\ldots,m
\end{equation}
which can be in turn used to show that the entropy production vanishes,
\begin{align}
\frac{\d S}{\d u}=\sum_{j=1}^m \lambda_j(u) \tr\left(\frac{\d \omega_{\text{GGE}}(u)}{\d u}Q_j(u)\right) = 0.
\end{align}
Hence, we see that, if the conditions of Result \ref{res:entropyproductionGGE} are satisfied, the entropy of the effective description in terms of GGE states is also preserved in the limit of a quasi-static process.

Let us now discuss heuristically under which conditions the premise that $\{u\mapsto \lambda_j(u)\}_1^q$ are smooth functions is expected to be fulfilled. This can be well illustrated by the following example: 

\begin{example}[Quasi-static process with entropy production within Gibbs and GGE]
Consider the case of a two dimensional system for which we take $q=1$, that is, the only conserved quantity is the 
Hamiltonian $Q_1=H$ itself (the Gibbs equilibration model). Consider initially a non-degenerate Hamiltonian $H(0)=E \ketbra{1}{1}$ and 
an arbitrary initial state $\rho(0)$ with an inverse temperature $\beta(0)>0$ and thus the entropy is smaller than $\log(2)$. 
Now suppose that the final Hamiltonian $H(1)=0$ has degenerate energy levels. 
We now show that: 
%i) that one can identify a quasi-static trajectory where there is no smooth behaviour of the Lagrange-multipliers (in this case $\beta(u)$), 
\begin{enumerate}[(i)]
\item there is a quasi-static trajectory without a smooth behaviour of the Lagrange-multipliers (in this case $\beta(u)$), 
\item  this results in a positive entropy production, and 
\item how this implies that taking only a single conserved quantity---in this case the energy---does 
not provide a good approximation of the time-averaged state. 
\end{enumerate}
\end{example}
To see the above points, take as Hamiltonian path $H(u) = E(1-u)\ketbra{1}{1} = H(0)(1-u)$ and an initial Gibbs state with inverse 
temperature $\beta(0)$. Then the eigenbasis in the entire process does not change. Now note that 
the condition \eqref{eq:paralleltransport} implies that the energy is preserved in every equilibration. 
But since we are dealing with a two-dimensional system, as long as $H(u)$ is non-degenerate, 
the state itself will remain constant $\omega(u)=\rho(u)$ for any $u \in [0,1)$. This requires that the inverse 
temperature $\beta(u) \rightarrow \infty$ as $u\rightarrow 1$: Along the path, the inverse temperature needs to fulfil 
$\beta(u) = \beta(0)/(1-u)$ to keep the state constant. Therefore, it necessarily diverges as $u\rightarrow 1$.  
To show ii), simply note that when one reaches $H(1)$, the final state is a maximally mixed state with entropy $\log(2)$, which is 
larger than the one of the initial state by assumption. To show iii), observe that the time averaged state would remain 
constant throughout the protocol, thus it differs from the GGE at $u=1$. Similar reasoning as for this 
example holds true for higher dimensional systems, where the ground state degeneracy of $H(1)$ is higher than that of $H(0)$.

The previous example shows that in some cases the premise of Result \ref{res:entropyproductionGGE} is not fulfilled, however, 
these pathological cases often imply that the chosen GGE description is not accurate. For example, in the case of encountering a
ground state degeneracy, any conserved quantity in the GGE that discerns the ground states would be enough to fix the problem. 
However, we leave in general open whether one can find smooth trajectories for $u\mapsto \lambda_j(u)$ for a given set of conserved quantities 
and trajectory of Hamiltonians---this may well depend on the specifics of the model and on the ambiguity of what constants of
motion to pick in the first place \cite{Gogolin2015}. 

\subsection{Entropy production for Gibbs ensembles}\label{sec:entropyproductiongibbs}

As discussed above in the case of the GGE ensemble, it is in general necessary to ensure that the Lagrange multipliers $u\mapsto \lambda_j(u)$ follow a smooth trajectory in order to certify that there is no entropy production in a quasi-static process. This requires to compute the Lagrange multipliers following the model of \eqref{eq:equilibrationmanyquenches} and keeping track of the conserved quantities. We will see now that the situation simplifies substantially for the case of the Gibbs model of equilibration (where the energy is the only conserved quantity). 

\begin{result}[Absence of entropy production within Gibbs model]\label{res:gibbsentropyproduction} Consider an initial and final Hamiltonian $H(0)$ and $H(1)$ and initial state $\omega_{\text{\rm Gibbs}}(0)=e^{-\beta(0) H(0)}/Z$ with finite $\beta(0)>0$. There exist a quasi-static trajectory $H(u)$ so that there is no entropy production if and only if there exist $\beta^*>0$ so that 
\begin{equation}\label{eq:entropyconstant}
S(\omega_{\text{\rm Gibbs}}(0))=S(e^{-\beta^* H(1)}/Z)
\end{equation}
\end{result}
Note that one of the implications is trivial. The final state is $e^{-\beta^{(1)} H(1)}/Z$, hence if there exist no $\beta(1)=\beta^{*}$ so that \eqref{eq:entropyconstant} is fulfilled, then it is clearly impossible to keep the entropy constant. This can happen if $H(1)$ does not admit any Gibbs state with the initial entropy. The non trivial implication of the previous result is that as long as $H(1)$ admits a Gibbs state with the initial entropy, one can always find a quasi-static trajectory that keeps the entropy constant. Indeed, we find that the quasi-static trajectory achieving it does not need to be fine-tuned. We discuss in Appendix \ref{sec:examples-constant-entropy-gibbs}, together with the proof of Result \ref{res:gibbsentropyproduction}, that any quasi-static process where the degeneracy of the ground state does not increase along the protocol will indeed keep the entropy constant. This condition is expected to be satisfied  for trajectories of generic local Hamiltonians, which have non-degenerate ground spaces 
for typical choices of the Hamiltonian parameters \cite{Aizenmann1981}. 

\subsection{Entropy production and reversibility}
We  now connect entropy production to reversibility of processes.
First, let us note that for the GGE equilibration model (similarly for the Gibbs model since it is a particular case of the former), condition \eqref{eq:paralleltransportcontinuum} is invariant if one reverses the process. More specifically, given $H(u)$ and $\omega_{\text{GGE}}(0)$ as initial state, condition \eqref{eq:paralleltransportcontinuum} determines the trajectory of states (if the premises of Result \ref{res:entropyproductionGGE} are met) $\omega_{\text{GGE}}(u)$, with $u$ from $0$ to $1$. Now, we can consider the trajectory $H(\tilde{u})$ with initial state $\omega_{\mathrm{GGE}}(\tilde{u}=0)$ with $\tilde{u}=1-u$. One can easily verify that 
\begin{equation}
\tr\left(\frac{\d \omega_{\text{GGE}}(\tilde{u})}{\d \tilde{u}}Q_j(\tilde{u})\right) = 0,\quad \forall j=1,\ldots,m.
\end{equation} 
Hence, the equilibrium state for the trajectory $H(\tilde{u})$ is given exactly by $\omega_{\text{GGE}}(\tilde{u}=1-u)$ and thus, the protocol is reversible. In other words, we have seen that for the GGE equilibration model reversible protocols correspond to arbitrarily slow protocols where no entropy is produced on the \emph{system and bath} together, exactly as is the case for 
phenomenological thermodynamics.

A well-known feature of phenomenological thermodynamics is that reversible transformations are always beneficial in work-extraction protocols, a phenomenon which is referred to as
the \emph{minimum work principle}. We will later see that this principle naturally holds when the model of equilibration is given by Gibbs states, but its range of applicability  is considerably reduced when the equilibrium states are described by GGE. Indeed, we will see explicitly that when the equilibration model is given by a GGE ensemble of free fermions, it can well be beneficial to go through a given protocol quickly and thereby producing entropy.

%true if we have access to the total Hamilton of $SB$ for the quenches, but in general not if we can only change the Hamiltonian on $S$ in the work-extraction protocol \ma{and the equilibration model is not Gibbs.} 

Before we go on to discuss explicit work extraction protocols, let us stress that the entropy in $SB$, which can only increase or remain constant, is not 
simply the sum of the entropies of $S$ and $B$. This happens because we are considering interacting quantum systems that show correlations between $S$ and $B$. 
This is true both in the exact and the effective description. Indeed, in general the von~Neumann entropy in $SB$ is smaller than or equal to the sum of local entropies
\begin{equation}
S\left(\omega\right) \leq S\left(\tr_B\left(\omega\right)\right) + S\left(\tr_S\left(\omega\right)\right),
\end{equation}
with equality if and only if $\omega=\tr_B\left(\omega\right)\otimes \tr_S\left(\omega\right)$, i.e., when $S$ and $B$ are completely uncorrelated. 
Thus, entropy-production in our set-up does not always mean that entropy is \emph{locally} produced in the system and the bath. The generation of entropy is not always 
associated with the generation of correlations, as in ref.\ \ \cite{Esposito2010}, but rather to the mixing induced by equilibration processes. The global entropy may, for example, 
increase due to a decrease of correlations, but entirely without changing the local states of the system.

As a final remark, note that in the so-called Isothermal Reversible Process (IRP) the entropy of the system $S$ does not remain constant, while the entropy of the whole compound $SB$ does, as we discuss in Examples 1 and 2 in Appendix \ref{sec:examples-constant-entropy-gibbs}).

%\ma{Before considering work extraction protocols, let us note  a connection between the (effective) entropy $S$ considered here and the diagonal entropy \cite{Polkovnikov2011486}. The diagonal entropy $S_d$ is defined as the Shannon entropy of the diagonal entries of $\rho$ in the basis defined by the (quenched) Hamiltonian, and it was introduced to characterize the thermodynamic entropy in quantum systems out of equilibrium. If the Hamiltonian is non-degenerate, the diagonal entropy coincides with the Von Neumann entropy of the time-averaged state \eqref{eq:time-average}. Therefore, when one considers \emph{all} the conserved quantities, we obtain that $S$ equals $S_d$.  
%Note however that, when considering a work extraction protocol, this equivalence is only strictly true after one quench. If one performs more quenches, the two entropies can differ, as the effective description is built upon the previous equilibrium states, whereas the diagonal entropy $S_d$ is computed using the real time-evolved state. 
%\he{\small\it [[he: i removed the paragraph marti mentioned, because for me it was already discussed in the introduction about the GGE's. ]]}  

\section{The minimum work principle and work extraction}\label{sec:workextraction}
In order to study work extraction, we first focus on the \emph{minimum work principle}, which is intimately related  to work extraction and other tasks in thermodynamics such as, e.g., the erasure of information (Landauer's Principle). 
We  take as the definition of the minimum work principle that, \emph{given an initial equilibrium state and a path of Hamiltonians,
the work performed on the system is minimal for the slowest
realisation of the process} \cite{Allahverdyan2005}. More precisely, we consider a trajectory of Hamiltonians $u\mapsto H(u)$ with $u\in [0,1]$. Consider now protocols with $N$ quenches (each followed by an equilibration). That is, we choose $N$ values $(u^1,\ldots,u^N)$, so that the protocol is determined by $H^{(m)}=H(u^m)$ and $\omega_{(\cdots)}^{(m)}$ as determined by \eqref{eq:equilibrationmanyquenches}. The minimal work principle states that the optimal protocol maximizing $W$ in \eqref{eq:TotalWdef} is the one where $N \rightarrow \infty$ and 
$u^m=m/N$. Note that here, as we generically take the convention that work is \emph{extracted} from the system, minimising the work cost corresponds to maximising $W$ in \eqref{eq:TotalWdef}.

We note that, while being the most relevant notion of the minimum work principle for our set-up (see also ref.\  \cite{Allahverdyan2005}), this definition differs from the one usually found in thermodynamics text-books, where the content of the minimal work principle reads: among all the possible paths between two \emph{fixed} equilibrium \emph{states}, reversible protocols are optimal. Here, we fix instead a given trajectory between an initial and final Hamiltonian and question whether the quasi-static realisation is also the optimal. Note that both notions---where the initial and final states are fixed or where the trajectory is fixed instead---coincide in the model of equilibration of eq.\ \eqref{eq:thermalequilibration}, which is the standard one in text-book thermodynamics. The reason is that in the model \eqref{eq:thermalequilibration} Hamiltonians and states are in one to one correspondence and all the quasi-static trajectories between two Hamiltonians provide the same work. However, when other models of equilibration are considered ($\omega_{(\cdots)}$ in \eqref{eq:equilibrationmanyquenches}) the equivalence breaks down since Hamiltonians and states are not in one-to-one correspondence: the final state depends on the specific trajectory.

It seems then natural to ask what justifies our definition of the minimal work principle. The answer lies in the fact that the notion of the minimal work principle considered here can be easily connected with the second law of thermodynamics, formulated as: no positive work can be extracted in a cyclic process from states initially in thermal equilibrium (a Gibbs state), or more generally, in a passive state. In this context cyclic refers to the fact that the initial and final Hamiltonian coincide, which does not imply that the initial and final state coincide, unless we would be using the model of equilibration \eqref{eq:thermalequilibration}. This relation with the minimal work principle and the second law will be made explicit in the following sections where we study the Gibbs and the time-average ensembles.

\subsection{The minimum work principle and work extraction for Gibbs ensembles}\label{sec:gibbsMinimalWorkPrinciple}

Let us consider the same setup as the one laid out in Section \ref{sec:equilibrationmanyquenches}, with an initial state $\omega_{\text{Gibbs}}^{(0)}$ and a protocol that performs $N$ quenches according to a certain trajectory $u\mapsto H(u)$, where $H^{(m)}:=H(m/N)$. Let us stress that here we do not take the limit of $N \rightarrow \infty$ and we keep it general by considering finite $N$. Let us recall from eq.\ \eqref{eq:TotalWdef} that the total work performed is given by the sum of the individual work $W^{(m)}$ in the $m$-th step,
\begin{eqnarray}
\nonumber W&=&\sum_{m=1}^{N}W^{(m)} = \sum_{m=1}^{N} \Tr \left( \omega_{\text{Gibbs}}^{(m-1)} (H^{(m-1)}- H^{(m)}) \right)\\
\nonumber &=& \tr\left(\omega_{\text{Gibbs}}^{(0)}H^{(0)}\right)- \tr\left(\omega_{\text{Gibbs}}^{(N)}H^{(N)}\right)\\
\label{eq:workdifferenceofenergies}
&+&\sum_{m=2}^{N} \Tr \left( (\omega_{\text{Gibbs}}^{(m-1)}- \omega_{\text{Gibbs}}^{(m)})H^{(m)} \right),
\end{eqnarray}
where in eq.~(\ref{eq:workdifferenceofenergies}) we have simply reorganised the terms and added and subtracted the quantity $\tr(\omega_{\text{Gibbs}}^{(N)}H^{(N)})$. 
We can now use our model of equilibration, as given by eq.\ \eqref{eq:equilibrationmanyquenches} that we recall here for completeness, 
\begin{equation}\label{eq:i-thgibbs}
\omega_{\text{Gibbs}}^{(m)}=\gibbs(\omega_{\text{Gibbs}}^{(m-1)},H^{(m)})
=\frac{e^{-\beta^{(m)} H^{(m)}}}{Z^{(m)}}
\end{equation}
for all $m\ge 1$, where $Z^{(m)}=\tr (e^{-\beta^{(m)} H^{(m)}})$ and $\beta^{(m)}>0$ is determined by the conservation of average energy: 
$\tr (\omega_{\text{Gibbs}}^{(m-1)}H^{(m)})=\tr (\omega_{\text{Gibbs}}^{(m)}H^{(m)})$. One can easily check that energy conservation implies that the last sum in \eqref{eq:workdifferenceofenergies} vanishes, which implies that 
\begin{equation}\label{eq:workinitialfinal}
W=\tr\left(\omega_{\text{Gibbs}}^{(0)}H^{(0)}\right)- \tr\left(\omega_{\text{Gibbs}}^{(N)} H^{(N)}\right),
\end{equation}
where $\omega_{\text{Gibbs}}^{(N)}$ depends on $N$ and the trajectory $H(u)$.

From eq.~\eqref{eq:workinitialfinal} we see that given a fixed final Hamiltonian $H(1)$, the protocol that costs the minimum amount of work (and maximises the extracted work $W$) is given by the one that leaves the final state with the least average energy. Since the average energy is monotonic with the entropy for Gibbs states of positive temperature, we conclude that the optimal protocol is the one minimising the entropy of the final state $\omega_{\text{Gibbs}}^{(N)}$. Furthermore, as the entropy can only increase throughout the protocol (see Sec.\  \ref{sec:entropyproduction}), a protocol creating no entropy is optimal. 

It has to be stressed that this holds true only as long as the final temperature of the Gibbs state is positive, which happens if 
\begin{equation}\label{eq:condpositivetemp}
\tr (\omega_{\text{Gibbs}}^{(N)}H^{N} ) \leq \frac{1}{d} \tr  (H^{(N)}),
\end{equation} 
where $d$ is the dimension of the Hilbert space. Note that the right hand side of the equation typically (e.g., for many body systems with short range interactions) grows linearly with the number of particles. Therefore, if the total system is big enough, we expect condition \eqref{eq:condpositivetemp} to be satisfied, and thus the minimum work principle to hold.   

Taking together the facts that a protocol creating no entropy is optimal and the results of Appendix \ref{sec:examples-constant-entropy-gibbs}--which show conditions so that the quasi-static entropy has no entropy production---one can conclude that the minimal work principle is satisfied for any trajectory so that $d_g(H(0)) \geq d_g(H(u)) \geq d_g(H(1))$ (see Result \ref{example1}). As mentioned in Sec. \ref{sec:entropyproductiongibbs}, this condition is satisfied for trajectories of generic Hamiltonians, which have non-degenerate ground spaces.  

Let us now comment on the relation between the minimal work principle and the second law of thermodynamics. First, note that if we fix a trajectory $H(u)$ so that $H(0)=H(1)$, then the final state is a Gibbs state $\Omega_{\beta_N}(H(0))$. The inverse temperature $\beta^N$ at the end of the protocol certainly depends on the particular trajectory and the number of quenches performed. However, it is clear by the discussion of Sec. \ref{sec:entropyproduction} that $S(\Omega_{\beta_N}(H(0)))\geq S(\omega(0))$. Hence, since the final state is a Gibbs state with respect to $H(0)$ and with more entropy than the initial Gibbs state and the energy is monotonic with the entropy for Gibbs state, the extracted work is negative. Note that this depends crucially on having Gibbs states as equilibrium states and it will not be reproduced by time-average or GGE models of equilibration as we discuss in the next sections. 

Also, the minimum work principle can be used to study work-extraction protocols from \emph{non-equilibrium} states. As an example,  consider as initial conditions  a pair of state and Hamiltonian $\rho^{(0)}$ and $H^{(0)}$ respectively. The goal is to extract work from $\rho^{(0)}$  by performing a \emph{cyclic} protocol, where $H^{(N)}=H^{(0)}$.  Note that here the initial state is not in a Gibbs state with respect to the initial Hamiltonian, $H^{(0)}$. Nevertheless, after the first quench, it does thermalise to $\omega_{\text{Gibbs}}^{(1)}=\Omega_{\text{Gibbs}}(\rho^{(0)},H^{(1)})$. From that moment onwards, the minimum work principle can be used, implying that it is always optimal to come back to $H^{(0)}$ by a protocol that does not create entropy.  The only remaining question is in fact to which Hamiltonian the first quench is performed, an issue that is discussed in Appendix \ref{sec:gibbsworkextraction}.

%\section{Work extraction with time averaged states}
\label{sec:timeaveragework}
\subsection{Work extraction and the minimum work principle for time averaged states}\label{MinimalWorkPrincipleTA}
We  now discuss the minimum work principle for protocols of work extraction when the model of equilibration that is used is the time-average $\Omega_{\text{TA}}$. Let us assume a smooth trajectory of Hamiltonians $H(u)$ and some initial equilibrium state $\omega_{\text{TA}}(0)$. Since the trajectory of Hamiltonians is smooth we know that the final state in the quasi-static protocol $\omega^{\text{q.s.}}_{\text{TA}}(1)$ has the same entropy as the initial state $\omega_{\text{TA}}(0)$, indeed even the same eigenvalues as $\omega_{\text{TA}}(0)$ (see Appendix~\ref{sec:reversible_processes}). The question is, whether this also implies that the quasi-static protocol is optimal in terms of the average work-cost. We will show that this is in general only true if this final state in the quasi-static protocol is also a \emph{passive} state, meaning that it is diagonal in the energy-eigenbasis and the energy-populations decrease with increasing energy:
\begin{equation}
\tr(H(1)\omega^{\text{q.s.}}_{\text{TA}}(1))= \sum_k  (\omega^{\text{q.s}}_{\text{TA}}(1))^{\downarrow}_k E_k(1),
\end{equation}
where $(\omega_{\text{TA}}(1))^{\downarrow}$ is the vector of eigenvalues of $\omega_{\text{TA}}(1)$, ordered such that $(\omega_{\text{TA}}(1))^{\downarrow}_k \geq (\omega_{\text{TA}}(1))^{\downarrow}_l$ if $E_k(1)\leq E_{l}(1)$. 

\begin{result}[Passiveness of optimal protocols]\label{result:time-average-quasi-static}
Given an initial state and a smooth trajectory of Hamiltonians,
if the final state in the quasi-static realisation of the protocol is
passive, then the the quasi-static realisation of the protocol is optimal.
\end{result}
This result follows, because passive states can only \emph{increase} their average energy under any unitary transformation \cite{Pusz1978,Lenard1978}:
\begin{align}
\tr(H\rho) \leq \tr(HU\rho U^\dagger),\quad \rho\text{ passive w.r.t. }H.
\end{align}
In particular the final state of the quasi-static protocol $\omega_{\text{TA}}^{\text{q.s.}}(1)$ is related to the initial state by some unitary transformation $U^*$ since their spectra are identical. To see that the quasi-static realisation of the protocol is optimal in this case, let us now consider any realisation of the protocol with only a finite number of quenches $N$ and let us denote the final state in a protocol with $N$ quenches as $\omega_{\text{TA}}^{N}$. Since the time-average equilibration model can be thought of as applying a mixture of unitaries (evolving the system for some random time) in any finite realisation including $N$ quenches, the final state $\omega_{\text{TA}}^N$ is related to the initial state by:
\begin{align}
\omega_{\text{TA}}^N = \sum_i p_i U_i\omega_{\text{TA}}(0) U_i^\dagger = \sum_i p_i (U_iU^*)\omega_{\text{TA}}(1)(U_iU^*)^\dagger,\nonumber
\end{align}
where $p_i$ is some probability distributions of unitaries. But since $\omega_{\text{TA}}^{\text{q.s.}}(1)$ is passive, we henceforth have
\begin{align}
\tr(H(1)\omega_{\text{TA}}^N) \geq \tr(H(1)\omega_{\text{TA}}(1)),
\end{align}
which proves the claim.

The minimum work principle for cyclic unitary processes was studied in ref.\ \cite{Allahverdyan2005} where it was shown that the minimum work-principle holds if: i) the initial state is passive with respect to the initial Hamiltonian $H(0)$ \emph{and} ii) the trajectory of Hamiltonians is such that the initial and final Hamiltonians $H(0)$ and $H(1)$, respectively, do not have a level-crossing w.r.t to each other. Here, by an absence of level-crossing we mean that if $E_i(0) \geq E_j(0)$, then also $E_i(1) \geq E_j(1)$ (note that the labelling of the energy-basis is fixed since we require the Hamiltonian trajectory to be smooth).
It is now easy to see that under the premise that the initial state is passive, the condition that the final state in the quasi-static realisation is also passive is indeed equivalent to the absence of such a level-crossings. Thus, our result naturally generalises that of \cite{Allahverdyan2005}.

Finally, let us note that given two Hamiltonians $H(0)$ and $H(1)$ and an initial equilibrium state, it is always possible to construct a smooth trajectory of Hamiltonians that connects the two Hamiltonians and such that the final state in the quasi-static protocol is passive and has the same spectrum as it had initially. This can be done with the protocol presented in Appendix \ref{app:protocol_qs_topassive}. However, note that this protocol requires global control over the Hamiltonians. Once we can only control some part of the Hamiltonian, all the available smooth trajectories might lead to a non-passive final state in the quasi-static realisation, so that it can become beneficial to use a protocol with a finite number of quenches which results in entropy-production.  

As in the case of the Gibbs equilibration model, one can easily relate the analysis above to discuss the second law of thermodynamics. First, note that the optimal protocol between $H(0)$ and $H(1)$ is such the final state has the same spectrum and it is passive. Hence, if $H(0)=H(1)$ we conclude that one can extract positive work from the initial equilibrium state $\omega_{\text{TA}}(0)$ if and only if it is not passive. Of course this fact is well-known if we consider protocols of work extraction that just apply a unitary transformation to the initial state. Here, we are deriving a similar behaviour with families of protocols that are instead quenches and equilibrations to the time-average state. 

In summary, we have identified conditions that ensure that the quasi-static realisation of a given protocol is optimal. This condition generalise the ones found in ref.\  \cite{Allahverdyan2005}. Also, we have shown that any state can be brought to its passive form---keeping the same spectrum---by applying a quasi-static protocol over a specific trajectory of Hamiltonians. Altogether, this show that quasi-static protocols are as powerful for work extraction as they can conceivably be.

\subsection{Work extraction and the minimum work principle for GGE states}\label{Sec:WEGGE}

In this section we briefly analyse notions of work extraction in the case of GGE models of equilibration. Although it is difficult to provide general results for the case of the GGE, without having specified the particular form of the conserved quantities, we do include here a general formulation of the problem at hand as an introduction to particular example of free fermions that we study later. 
In this situation, the equilibrated states are maximum entropy states 
\begin{equation}
\gge(\rho,H^{(m)},\{Q_j^{(m)}\}) := \argmax_{\sigma\in\mathcal{E}(\rho,\{H^{(m)},Q_j^{(m)}\})} S(\sigma)\,.
\end{equation}
for a collection of constants of motion $\{Q_j^{(m)}\}$ that are relevant at a given step $m$ of the protocol. 
For a given protocol, the work extracted is again
\begin{equation}
W = \sum_{m=1}^{N} W^{(m)} = \sum_{m=1}^{N} \tr\left(
\omega^{(m-1)}_{\text{GGE}}(H^{(m-1)} - H^{(m)})
\right),
\end{equation}
so that in order to compute the extracted work for a given protocol, one has to keep 
track of the Lagrange multipliers along that protocol. 
The optimal work extraction is attained as the supremum of this expression over such protocols.
In agreement with our considerations for time-averaged states, here we will find that the minimum work principle is in general not satisfied for many-body models that equilibrate to a GGE. Ultimately, this result is linked to the fact that \emph{for GGEs there is in general no direct link between entropy and energy}, in strong contrast to the case of Gibbs states. We show this statement by considering specific classes of models for which the GGE is relevant, namely the class of physical systems described by free fermions, a most relevant type of systems that are known to be well described by the generalised Gibbs ensemble. In particular we will show an example where a fast protocol outperforms a slow protocol despite the fact that an effective description by Gibbs states would suggest the opposite.

\section{Free fermionic systems}
\label{sec:freefermions}

On top of showing the validity of the above result, the reason for largely focusing on quadratic fermionic models is three-fold. First, they can be efficiently simulated, allowing us to test how well
the effective description of the system approximates its real (exact) dynamics. Also, they are integrable, which implies that a GGE description is in general necessary to 
capture their equilibration behaviour \cite{PolkovnikovReview,Gogolin2015}. Finally, they can be simulated with
ultra-cold atoms in optical lattices in and out of equilibrium \cite{Koehl2,Schneider_fermionic_transport,BlochReview,Lewenstein2007}. 
While the discussion presented here is focused on non-interacting fermionic systems, it should be clear that their bosonic lattice instances
\cite{BlochSimulations,BlochReview,nature_bloch_eisert}
and even bosonic continuous systems \cite{SchmiedmayerGGE,SchmiedmayercMPS} can be captured in an analogous framework
with very similar predictions. The latter situation is specifically
interesting as modelling the physics of ultra-cold atoms on atom chips that is expected to provide an experimental
platform probing the situation explored here where a GGE description is relevant.

%\ro{In Sec.\ \ref{sec:gibbsrestricted}, we have shown that in the case of an equilibration describe by the Gibbs state, the optimal protocol corresponds to the one that creates the least amount of entropy. More concretely, the optimal protocol corresponds with a quench $H^{(0)}\mapsto H^{(1)}$ followed by a quasi-static process from $H^{(1)}$ back to the initial Hamiltonian $H^{(0)}$. This structure emerges because the equilibrium state after the first quench is passive. Hence minimising the entropy production (by a quasi-static process) is equivalent to minimizing the energy. In the case of the GGE equilibration this is no longer the case: one can easily obtain active states after an equilibration process. Such active states are better exploited, in terms of work extraction, if the protocol does \emph{not} contain a quasi-static trajectory.  Hence, one can expect scenarios where entropy production is beneficial to extract work. 
%
\subsection{Hamiltonian, covariance matrix and GGE construction}
We consider quadratic fermionic Hamiltonians of the form 
\begin{equation}
	H=\sum_{i,j=1}^nc_{i,j} a_i^{\dagger} a_j, 
\end{equation}
where $n$ is the number of different modes and the fermionic operators satisfy the anti-commutation relations $\{a_i,a_i^{\dagger}\}=\delta_{i,j}$, $
\{a_i,a_j\}=\{a_i^{\dagger},a_j^{\dagger}\}=0$. The Hamiltonian $H$ can be transformed into
\begin{equation}
H=\sum_{k=1}^n \epsilon_k \eta_k^{\dagger} \eta_k\, ,
\label{Hfreeferm}
\end{equation}
where $\eta_k^{(\dagger)}$ is the annihilation (creation) operator corresponding to the
$k$-th eigenmode of the Hamiltonian.

It is well known that equilibrium states of Hamiltonians of the form \eqref{Hfreeferm} are not well described by Gibbs states, but 
rather by generalised Gibbs ensembles, 
with the conserved quantities being the occupations of the energy modes $Q_k=\eta_k^{\dagger} \eta_k$, $k=1,\dots,n$ \cite{Gogolin2015}. 
Notice that the number of conserved quantities used for the construction of the GGE is  the number of distinct modes $n$ and, hence, 
is linear (and not exponential) in the system size.

We define the \emph{correlation matrix} $\gamma(\rho)$ of a state $\rho$ as the symmetric matrix having entries 
\begin{equation}
\gamma_{i,j}(\rho) = {\rm Tr} ( \eta_i^{\dagger} \eta_j \rho ).
\label{gammai,j}
\end{equation}
If the state $\rho$ is Gaussian, then  $\gamma(\rho)$ contains all information about $\rho$, and its time evolution under Hamiltonians of the type \eqref{Hfreeferm} keeps it Gaussian. In other words, the full density matrix $\rho$ can be reconstructed from just knowing the correlation matrix. 

The correlation matrix of the GGE $\gge(\rho,H,\{\eta^{\dagger}_k \eta_k\})$ is found by maximising the entropy while preserving  all $Q_k=\eta_k^{\dagger} \eta_k$, which simply reduces to dephasing the correlation matrix defined in \eqref{gammai,j} to the diagonal (see Appendix \ref{FreeFermionsI} for details). This provides a simple method for obtaining $\gamma(\gge(\rho,H,\{\eta^{\dagger}_k \eta_k\}))$. 

Note that these GGE descriptions are also Gaussian states. 
Hence, in the following we can always restrict to Gaussian states. Even when the initial state is not Gaussian, all the results are unchanged if the initial state is replaced by a Gaussian state that has the same correlation matrix.
Consequently, in the following, the discussion is reduced to the level of correlation matrices instead of the full density matrices. This allows us to perform numerical simulations of the real time-evolution as well as the effective description of large systems, since they have dimension $n\times n$ instead of the $2^n\times 2^n$ needed to describe the full density matrix.

\subsection{Work extraction and minimum work principle for free fermions}\label{Sec:Optimal protocols for work extraction and minimum work principle}

First we consider optimal protocols for work extraction in a scenario where the Hamiltonian can be transformed to any 
quadratic Hamiltonian of the form \eqref{Hfreeferm}. The discussion is similar to that of  Sec.\ \ref{MinimalWorkPrincipleTA}, but in the context of GGE equilibrium states. 
%The extracted work  \eqref{eq:workdifferenceofenergies} can be written as \eqref{eq:workinitialfinal}, but replacing Gibbs states accordingly by GGE states, 
%\begin{equation}
%W=\tr\left(H^{(0)}\left(\rho^{(0)}-\gge^{(N)}\right) \right)
%\label{eq:WGGE}
%\end{equation}
%where we assumed that the process is cyclic, $H^{(N)}=H^{(0)}$. Since $\tr\left(H^{(0)} \rho^{(0)} \right)$ is fixed, 
As in the previous sections, the optimal protocol is the one minimising the final energy, $\tr(\omega^{(N)}_{\text{GGE}} H^{(0)})=\sum_k n_k^{(N)} \epsilon_k^{(0)}$, where we assume the process to be cyclic and $n_k^{(N)}=\Tr({\eta_k^{(0)}}^{\dagger} \eta_k^{(0)} \omega^{(N)}_{\text{GGE}})$.
%\begin{equation}
%\tr\left(H^{(0)}\gge^{(N)} \right)=\sum_k p_k^{(N)} \epsilon_k^{(0)}.
%\end{equation}
In Appendix \ref{FreeFermionsIII}, we show that this minimisation yields $\tr(\omega^{(N)}_{\text{GGE}}H^{(0)}) \geq \tr(\omega^{*}_{\text{GGE}}H^{(0)})$, with
\begin{equation}
 \tr(\omega^{*}_{\text{GGE}}H^{(0)})=\sum_{k=1}^{n} (d^{(0)})^\downarrow_k (\epsilon^{(0)})^\uparrow_k,
 \label{pord}
\end{equation}
where $d^{(0)}_k$ are the eigenvalues of $\gamma( \rho^{(0)})$ and the symbols ${}^\uparrow$ and ${}^\downarrow$ indicate that the lists are ordered in increasing and decreasing order, respectively. An explicit protocol saturating this bound is constructed in Appendix \ref{FreeFermionsIII}. The optimal protocol is found to be $\emph{reversible}$, so that no entropy is generated, and one needs to perform an arbitrarily large amount of quenches to reach optimality. 

In the optimal final state $\omega^{*}_{\text{GGE}}$, the diagonal elements of the correlation matrix, corresponding to the population of the energy modes, decay as the energy of the modes increases. This form is reminiscent of the passive states previously introduced.  However, in general, states of the form $\omega^{*}_{\text{GGE}}$ do not need to be passive:  
While in passive states the occupation probabilities of the energy eigenstates are decreasing with increasing energy, here only the occupations of the different fermionic \emph{modes} decrease with the energy of the mode. The total energies are however obtained by combinations of different modes. An example of a state that is non-passive, but where the mode-populations are decreasing with increasing mode-energy is provided in Appendix  \ref{FreeFermionsIII}.
% To understand this difference, notice that the set of operations considered here, quenches (followed by equilibration) to quadratic Hamiltonians, is smaller than the set of all unitary operations, where passive states are the optimal final states \cite{Pusz1978} (see Appendix \ref{FreeFermionsIII} for details).

% whereas the operations we consider here are restricted to quenches (followed by equilibration) to quadratic Hamiltonians. This is discussed in more detail in Appendix \ref{FreeFermionsIII}. 
%They correspond to optimal final states of protocols consisting of quenches to quadratic Hamiltonians and equilibrations, whereas passive states are defined . %This is a direct consequence that our set of operations is not arbitrary (i.e., any global unitary operation) but is restricted to quenches to quadratic Hamiltonians.

Regarding the minimum work principle, one can use a similar line of reasoning as in Sec.\ \ref{MinimalWorkPrincipleTA}. For a fixed process, the minimum work principle is guaranteed to hold true as long as the possible final states---which are realised by implementing the process at different speeds---have the form  \eqref{pord}, i.e., their populations decrease with the energy of the modes. If this condition is not satisfied, the minimum work principle does not  hold in general.

\subsection{Numerical results: comparison between exact dynamics and effective descriptions}\label{sec:numerical}
In this section we compute the work extracted in different scenarios by (i) a numerical simulation of the exact unitary evolution of the system, (ii) using the effective description in terms of Gibbs states, and (iii) in terms of GGE states. 
As physical system we consider a chain of fermions, taking as an initial Hamiltonian,
\begin{equation}\label{eq:Hfermchain}
H^{(0)}= \sum_{i=1}^{n}  \epsilon_i a_i^{\dagger} a_i +   g\sum_{i=1}^{n-1}  \left(a_i^{\dagger} a_{i+1}+a_{i+1}^{\dagger} a_{i}\right).
\end{equation}
First we study the optimal protocol for the case unrestricted Hamiltonian case derived in Appendix\ \ref{Sec:WEGGE}, and next we consider the case of local changes of the Hamiltonian. In all cases we find a very good agreement between the real dynamics and the GGE effective description.  

Besides comparing the effective descriptions with the real dynamics, we also study the applicability of the minimum work principle. We give an explicit example of a process in which producing entropy is beneficial for work extraction, hence showing that the minimum work principle is violated in this example.

\subsubsection{Work extraction with unrestricted Hamiltonians and free fermions}

Here, we take as the initial state $\rho^{(0)}$ a GGE state whose populations $\gamma(\rho^{(0)})_{i,i} \in (0,1)$ in \eqref{gammai,j} are chosen i.i.d.\ from a Gaussian distribution. Again, note that this state is Gaussian. We then apply the protocol described in Appendix \ref{FreeFermionsIII} for maximal work extraction, and compare the results obtained by the exact dynamics and the GGE model of equilibration.  The exact dynamics are computed by, after the the $i$-th quench, letting the system unitarily evolve under the Hamiltonian $H^{(i)}$ for a time much longer than the time scale of equilibration. 
Fig.\ \ref{figGlobalQuenches} shows the results obtained using both approaches. It shows a very good agreement, as long as the number of fermions is sufficiently large (in the figure $n=100$). Yet small discrepancies are observed, which is due to the fact that we implement global quenches, for which the state  may not equilibrate. 
Note that, when performing local quenches and starting with a Gibbs state, as in Fig.\ \ref{figLocalQuenches}, equilibration of local observables is expected (see Sec.~\ref{sec:equilibrationmodels}) and the agreement is indeed excellent.
%and $N$ is not too small. In fact, for small $N$,  the state fails to equilibrate, in the sense that the extracted work depends on the amount of time spent in the equilibration process.  Our effective description, which is based on the assumption that the quantities of interest (i.e. energy) equilibrate, does not apply in such cases, and discrepancies are expected. Similar considerations can be applied to the results in Fig.\ \ref{figLocalQuenchesII}.
% where the observables of interest (i.e. energy) do not equilibrate.

We can also see in Fig.\ \ref{figGlobalQuenches} how work increases as the process becomes slower, becoming maximal in the limit $N\rightarrow \infty$, when reversibility is achieved. This is in agreement with our considerations in Sec. \ref{Sec:Optimal protocols for work extraction and minimum work principle}.

\begin{figure}
   \includegraphics[width=1.\columnwidth]{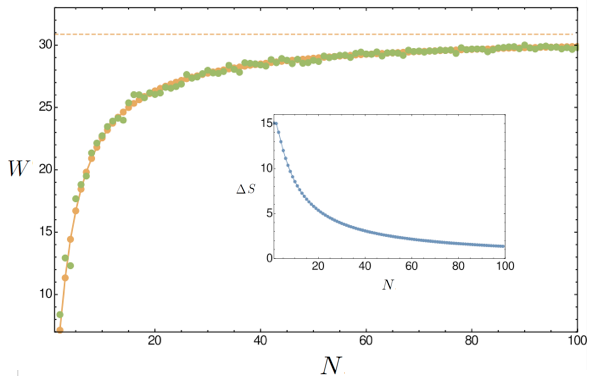}
    \caption{Extracted work in the optimal protocol with unrestricted Hamiltonians. As an initial state, we take a diagonal state in the basis $H^{(0)}$, with the populations $\{p_k^{(0)}\}$ 
    chosen at random between $0$ and $1$. We take $\epsilon=1, g=0.8$ and $N=100$. In order to simulate the real dynamics, after every quench, we let the system evolve for a time chosen at random between $20/g$ and $100/g$. In green, we show the results using the actual unitary dynamics, in yellow our effective description in terms of GGE states, and in dashed lines the analytical result leading to eq.\ \eqref{pord}. The inset figure shows the entropy generated in the effective description using GGE states. As the number of quenches increases (i.e., the process becomes slower), the generated entropy tends to zero and the extracted work tends to the upper bound. 
    }
\label{figGlobalQuenches}
\end{figure}

\subsubsection{Work extraction with restricted free fermionic Hamiltonians with a Gibbs initial state}
\label{Sec: Work extraction with restricted Hamiltonians}

Let us now assume that the Hamiltonian can only be locally modified, as discussed in Sec.\  \ref{Sec:OperationsAndEquilibration}. The Hamiltonian \eqref{eq:Hfermchain} is split  in three components. $H_S=\epsilon_1 a^{\dagger}_1 a_1$ ($S$ is a single fermion), $V=g (a_1^{\dagger} a_{2}+a_{2}^{\dagger} a_{1})$ and $H_B=H-V-H_S$. Our capability to change the Hamiltonian is thus reduced to a single parameter: the local energy $\epsilon_1$. Note that the coupling between the $S$ and the 
$B$ is not assumed to be weak.
The initial state takes the form,
\begin{equation}
\rho^{(0)} = \rho_S\otimes \Omega_\beta(H_B) = \rho_S \otimes \frac{e^{-\beta H_B}}{Z} ,
\label{initialStateG}
\end{equation}
where $\rho_S$ is initially out of thermal equilibrium; for example, in Fig.\ \ref{figLocalQuenches}, it is set to a lower temperature than the bath. As discussed before, we do not need to assume that the initial state $\rho_S$ (and hence $\rho^{(0)}$) is Gaussian, but the work extracted will only depend on its correlation matrix and not on the full density matrix, since the energy is a sum of second moments of the fermionic creation and annihilation operators and all the GGE states constructed in the process are Gaussian automatically.

Fig.\ \ref{figLocalQuenches} shows the extracted work from $\rho_S$ as a function of the number of quenches $N$, which is computed using the real exact unitary evolution, and the effective description in terms of both GGE and Gibbs states. The agreement between the unitary dynamics and the GGE description is excellent, for any value of $N$ and the parameters, but the Gibbs states fail to describe the process. Even if the bath is initially in a Gibbs state, see eq.\ \eqref{initialStateG}, the posterior evolution of $SB$ 
can not be correctly described by them.
Although the description in terms of Gibbs ensembles is quantitatively incorrect, it is fair to say that it describes some qualitative features of the results. In particular, the exact dynamics satisfies the minimum work principle, and so does the effective description with Gibbs states. This follows because condition \eqref{eq:condpositivetemp} is satisfied during the process. However, as we show in the next section, condition \eqref{eq:condpositivetemp} can fail to predict the applicability of the minimum work principle.

\begin{figure}
   \includegraphics[width=1.\columnwidth]{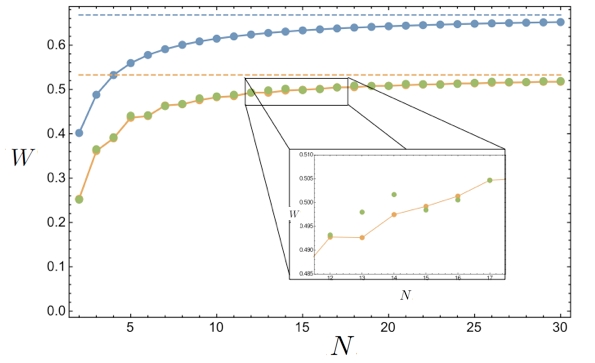}
    \caption{Extracted work with only local transformations on the state of the system. The different points correspond to the exact unitary evolution (in green), to the effective evolution in terms  GGE states (in yellow), and the effective evolution using Gibbs states (in blue). The continuous lines correspond to transformations with $N \rightarrow \infty$. As an initial state we take, $\beta=1/2$, $\tr (a_1^{\dagger} a_1 \rho_S )=0.1$, $n=100$. For the initial Hamiltonian, $\epsilon_0=0.1$, $\epsilon_i=1$ $\forall i\neq 1$, $g=0.5$. As a protocol we perform a first quench to $\epsilon_1=4.3$, followed by $N-1$ equidistant quenches back to the original Hamiltonian.  As in Fig.\ \ref{figGlobalQuenches}, the exact evolution is obtained by letting system and bath interact for a time much larger than the equilibration time ($t_{\text{Eq}} \propto 1/g$).
    }
\label{figLocalQuenches}
\end{figure}

\subsubsection{Work extraction with free fermionic restricted Hamiltonians with a GGE initial state}
\label{Sec: Work extraction with restricted Hamiltonians GGE numerics}

Equilibrium states when dealing with Hamiltonians of the type  \eqref{Hfreeferm} are well described by GGE states, it is therefore 
natural to generalise the initial state \eqref{initialStateG} to
\begin{equation}
\rho_0=\rho_S \otimes \omega^{(B)}_{\text{GGE}},
\label{initialStateGGE}
\end{equation}
where $\omega^{(B)}_{\text{GGE}}$ is a GGE state with respect to the local Hamiltonian of $B$, 
$H_B=\sum_{k=1}^n \epsilon^{(B)}_k \eta_k^{(B)\dagger} \eta^{(B)}_k.$
%Of course, the family of $\gge$ states is much wider than that of Gibbs states, and we expect to find a richer behaviour
Let us now pick a very particular initial state given by
\begin{equation}
 {\rm Tr} (\omega^{(B)}_{\text{GGE}}  \eta_k^{(B)\dagger} \eta_k^{(B)}) = \left\{
     \begin{array}{lr}
       1 & k\geq K\\
       0 &  k< K 
     \end{array}
   \right. ,
\end{equation} 
for some $K<n$. That is, only the $K$ most energetic modes are populated. No actual thermal state with positive temperature would have such properties due to the population inversion of the fermionic modes. It is important to acknowledge, however, that if we would chose an effective description as a Gibbs state for such initial states, we would nevertheless obtain a positive effective temperature provided that condition \eqref{eq:condpositivetemp} is satisfied. This will be the case as long as the number of populated energy-levels $K$ is small enough. Indeed, for any finite $K$, but large $n$, the energy-density in the state is much lower than the critical energy-density needed for negative effective temperatures.

The work extracted in a particular protocol with initial state \eqref{initialStateGGE} is plotted in Fig.\ \ref{figLocalQuenchesII}.  The results clearly show how the extracted work decreases with the time spent in the process. Therefore, more work is extracted when more entropy is produced, and the minimum work principle does not apply in this situation. In fact, this is to be expected because both the initial and the final state of the protocol are highly non-passive, and thus the conditions described in Sec.\  \ref{Sec:Optimal protocols for work extraction and minimum work principle} are not satisfied.  However, when using an effective description in terms Gibbs states, we would have predicted that it is always beneficial to use a quasi-static, reversible protocol since condition \eqref{eq:condpositivetemp} is satisfied for the case described in Fig.\ \ref{figLocalQuenchesII}.

%These results are not in contradiction with Sec.\ \ref{Sec: Work extraction with restricted Hamiltonians} where it was shown that the optimal work extraction process is reversible and thus infinitesimally slow. Crucially, if one has access to quenches to arbitrary quadratic Hamiltonians, one can always bring the state after the first equilibration to a state of the form \eqref{pord}, thus satisfying condition (i).  On the other hand, these results are in clear contrast with our considerations for Gibbs equilibrium states, which showed that in \emph{every} protocol maximal work is extracted when the transformations are reversible. The results are summarised in table XX.\he{\small \it[[missing]]}

\begin{figure}
   \includegraphics[width=1.\columnwidth]{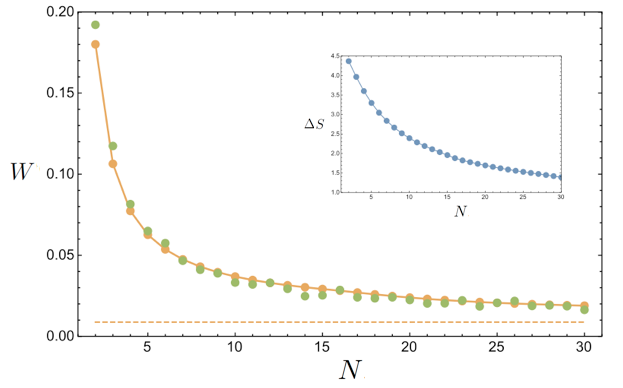}
    \caption{The extracted work achieved with only local transformations on the state of the system. As an initial state we take 
    the one specified by $K=32$, $\tr (a_1^{\dagger} a_1 \rho_S )=0.1$, and $n=150$. For the initial Hamiltonian, we take 
    $\epsilon_0=0.1$, $\epsilon_i=1$ $\forall i\neq 1$, $g=0.5$. As a protocol we perform a first quench to $\epsilon_1=1.6$, followed by $N-1$ equidistant quenches back to the original Hamiltonian. The different points correspond to the exact unitary evolution (in green), to the effective evolution in terms  GGE states (in yellow), and to infinitesimally slow protocol ($N \rightarrow \infty$). As in \ref{figGlobalQuenches}, the real evolution is obtained by letting system and bath interact for a sufficiently long time (chosen at random).
    }
\label{figLocalQuenchesII}
\end{figure}

\section{Conclusions}

%\rg{Again, to be written, since we prefer to agree on the main text before jumping into this. Feel free to make a first version of the conclusions.}

In this work, we have brought together the fields of research on equilibration and quantum heat engines. The main contribution of this work is to go beyond the usual paradigm 
of thermodynamics where work is extracted from a system in weak thermal contact with an infinite heat bath at a given fixed temperature. Instead, we consider closed quantum many-body systems of finite size and with strong coupling between its constituents. We make use of recent insights into the study of states out of equilibrium: closed many body systems do not equilibrate, but can be effectively described as if they had equilibrated when looking at a restricted, although most relevant, class of observables. The effective equilibrium state that describes the system for these observables is, however, not necessarily given by a Gibbs state; and even if so, its temperature will not remain constant under repeated quenches. In this case the effective equilibrium state is given by the time averaged state, the GGE or the Gibbs state, depending on the particular kind of system considered, as well as the family of observables that are taken into account.

{With this in mind, we have put forward a framework that studies work extraction of closed many body systems, incorporating Hamiltonian quenches as well as equilibrations 
according to the three models mentioned before. %By doing so, we are taking a step forward the usual assumptions on the effective equilibration of closed many body systems. 
We do not only assume that effective equilibrium state is a good description of the state evolving after a single quench, but also that such an equilibrium state can be taken as 
the initial state to describe further evolutions under subsequent quenches. This model, which is successfully tested for the model of free fermions, is what allows us to describe a closed system similarly to the way open systems (in contact with baths) are described in conventional thermodynamics. Thus, we can formulate similar questions regarding work and entropy production and indeed recover many of the phenomena present for open systems.}

In particular, we provide stringent conditions for the absence of entropy production in quasi-static protocols. This turns out to be intimately related to the optimal protocols for work extraction and the minimum work principle, which roughly speaking states that the work performed on the system is minimal for the slowest
realisation of a given process. We find that the minimum-work principle can break down in the presence of a large number of conserved quantities, while it remains intact if system and bath together can be well described by a Gibbs ensemble, even in the strongly interacting regime. This is shown numerically with the paradigmatic example of free fermions for which the extracted work decreases with the time spent in the process if we consider the GGE as equilibration model, but the minimum work principle still applies when the Gibbs description is assumed. It is the hope that the present work stimulates further studies at the intersection of the theory of quantum thermal machines and quantum many-body systems.

\subsection*{Acknowledgements} We acknowledge funding from the A.-v.-H., the BMBF (Q.com), the 
 EU (RAQUEL, SIQS, AQuS), the DFG (EI 519/7-1, CRC 183), the ERC (TAQ) and the Studienstiftung des Deutschen Volkes. M.~P.-L. acknowledges funding from the Spanish MINECO (Severo Ochoa grant SEV-2015-0522) and Grant No.~FPU13/05988. A.~R.\  is supported by the Beatriu de Pinos fellowship (BP-DGR 2013), the Spanish project FOQUS, and the Generalitat de Catalunya (SGR 874). All authors thank the EU COST Action No.~MP1209, \emph{Thermodynamics in the quantum regime}.
 
\subsection*{Note added}
Upon completion of this work, three manuscripts appeared that
 address topics of thermodynamics of quantum systems with multiple conserved quantities 
 \cite{ResourceGGE1,ResourceGGE2,ResourceGGE3}. While there is no actual overlap in content of the 
 present work with that body of work---in which a resource-theoretic mindset is advocated---and 
 the four manuscripts complement each other, the flurry of interest still 
can be seen as a manifestation of the excitement about studying how quantum thermodynamic have to be 
 altered in the situation of a number of conserved quantities being present.

%\bibliography{thermo2}

\newpage
\appendix
\section{Conserved quantities on the GGE}\label{sec:appGGE}

Here we discuss which are the physical arguments that justify the choice of a given set of conserved quantities that lead to a GGE. This question can be argued from two different approaches. 
On the one hand, one can argue that the relevant conserved quantities are 
the ones that are experimentally accessible and, hence, must be given beforehand.
This was the spirit of the seminal work of Jaynes \cite{Jaynes1957,Jaynes1957a}.
The objection against this approach is that it is \emph{subjective}, in the sense that the set of experimentally accessible observables
depends on the experimentalist.
On the other hand, one could take an \emph{objective} perspective and think that the relevant conserved quantities 
are precisely the ones that make the GGE as close as possible to the diagonal ensemble independently of the capabilities of the
experimentalist \cite{Sels2015}.
Within this approach, the notion of physically relevant is provided by how much an observable is able to reduce the distance between the GGE and the diagonal ensemble 
by being added into the set of conserved quantities that defines the GGE.
More specifically, in \cite{Sels2015} the distance between the time averaged state and the GGE is taken by the Kullback-Leibler (KL) distance (relative entropy) leading to
\begin{equation}
D(\ta(H),\gge(H,\{Q_i\}))=S(\gge(H,\{Q_i\}))-S(\ta(H)),\nonumber
\end{equation}
which is always positive and where we have omitted the initial state $\rho$ for brevity. %Hence, the entropy has to be minimised under the constraints given by $\{Q_i:\ \ \Tr(Q_i\ta=\Tr(Q_{i}\gge)\}$.\he{\small\it [[he:I don't understand this last sentence.]]}

In practice, given an $\varepsilon > 0$, the conserved quantities are successively added to the set of conserved quantities,
until $D(\ta(\rho,H),\gge(\rho,H,\{Q_i\}))\le \varepsilon$. By the Pinsker's inequality, this guarantees 
the physical indistinguishability between the two ensembles, i.e.,
\begin{equation}
\sum_\ell\left| \Tr(B_\ell(\ta-\gge) \right| \le \sqrt{2\varepsilon}\, ,
\end{equation}
for any positive operator valued measure (POVM) $B$.
The addition of operators to the set of conserved quantities is done as follows. Given a set of $j$ conserved quantities,
the new conserved quantity $j+1$ is introduced such that reduces as much as possible the entropy
\begin{equation}
\min_{Q_{j+1}} S(\gge(\rho,H,\{Q_i\}_{i=1}^{j+1})).
\end{equation}
%\ar{{\bf Exercice:} Prove that $Q_1$ is the Hamiltonian.}
In the subsequent sections we will study what are differences between the thermodynamics
given the Gibbs and the GGE as equilibration models.

\section{Time-average equilibration model - dissipation and reversibility}
\label{sec:reversible_processes}

In this section we show that it is possible to have dissipation, i.e., entropy-production, in an infinitely slow process within the time average equilibration model. 
Let us introduce the following example. 
We consider the Hamiltonians given by
\begin{equation}
H(\lambda_x,\lambda_z)=\lambda_x \sigma_x+\lambda_z \sigma_z
\end{equation}
and the continuous trajectory for $-1\le u \le 1$
\begin{equation}
 \lambda(u)=(\lambda_x(u),\lambda_z(u))=\left\{
 \begin{array}{ll}
 (-u,0) & \textrm{if } \ -1 \le u < 0 \\
 (0,u) & \textrm{if } \ 0\le u < 1 \\
 \end{array}\right.
\end{equation}
starting from an eigenstate of $\sigma_x$. 

For $-1\le u < 0$, the equilibration processes do not do anything to the state 
since the eigenbasis of the Hamiltonian is the eigenbasis of $\sigma_x$ and the system is left in its
initial state with zero entropy. But then, from $u=0$ on, the system is de-phased in the eigenbasis of $\sigma_z$ which is mutually orthogonal to the one of $\sigma_x$ and the entropy suddenly jumps to $\log 2$.
The reason for that is that although the Hamiltonians $H(\epsilon,0)$ and $H(0,\epsilon)$, with $\epsilon > 0$ arbitrarily small, are very close in the Hamiltonian space, their eigenbasis are totally different.

To avoid such effects, it is sufficient that not only the Hamiltonian trajectory is continuous, but also that the eigenvectors can be chosen in a smooth manner, i.e., so that each eigenvector $\ket{E_k(u)}$ is a smooth curve parametrized by $u$. More explicitly, the eigenvalues $p_k(u+\delta u)$ of the density matrix at parameter $u+\delta u$ can be written
in terms of the eigenvalues of the density matrix $\omega(u)$ at time $u$, as
\begin{align}
p_k(u+\delta u)&=\bra{E_k(u+\delta u)}\ta(u)\ket{E_k(u+\delta u)}\nonumber \\
&=\sum_{k'} p_{k'}(u)|\bra{E_{k'}(u)}E_k(u+\delta u)\rangle|^2, 
\end{align}
where we have used that the eigenvalues of $\ta(u+\delta u)$ are simply the diagonal elements of $\ta(u)$ in the basis given by $\ket{E_k(u+\delta u)}$. Let us now assume differentiability of the eigenbasis, i.e.,
\begin{align}
\ket{E_k(u+\delta u)}&=\ket{E_k(u)}+ \ket{X_k(u)}\delta u+ O(\delta u^2),
\end{align}
with $\mathrm{Re}\braket{E_{k'}(u)}{X_k(u)}=0$ due to ortho-normalisation.
Then we get
\begin{align}
p_k(u+\delta u)&= \sum_{k'}p_{k'}(u)\left(\delta_{k'k} + \delta u\, 2\mathrm{Re}\braket{E_{k'}(u)}{X_k(u)}\right) \nonumber\\ &\quad\quad  + \delta u^2 \sum_{k'}p_{k'}(u)  |\braket{E_{k'}(u)}{X_k(u)}|^2\nonumber\\
&= p_k(u) + O(\delta u^2).
\end{align}
This implies that the populations of the density matrix of the system are constant
in the slow process limit $\delta u\to 0$.

A natural way to guarantee that the Hamiltonian eigenbasis changes continuously along
the Hamiltonian trajectory is to restrict ourselves to smooth trajectories, in the sense that the tangent vectors to the curve in the Hamiltonian space are also continuous.

\section{Physically relevant situation of quasi-static processes for the Gibbs ensemble}
\label{sec:examples-constant-entropy-gibbs}

In this Appendix we discuss the entropy production of quasi-static processes with the model of equilibration given by $\omega_{\text{Gibbs}}$. In particular we show Result \ref{res:gibbsentropyproduction} and provide other lemmas that are used in the proof and that are interesting on its own. 

\begin{lemma}[General condition for entropy production within Gibbs model] \label{lemma:gibbsentropyproduction} 
Consider a quasi-static process along a trajectory of Hamiltonians $H(u)$ and an initial state $\rho(0)=e^{-\beta(0) H(0)}/Z$; 
if there exists \emph{any} smooth function $u\mapsto f(u)\neq 0 \, \forall u$ with $f(0)=\beta(0)$ such that 
\begin{equation}\label{eq:entropyconstant_lemma}
S\left( \frac{e^{-f(u) H(u)} }{Z} \right) =S(\rho(0))
\end{equation}
then the quasi-static process along $u\mapsto H(u)$ has no entropy production.
\end{lemma}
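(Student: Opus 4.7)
The plan is to connect the hypothesis about constant entropy along the auxiliary family $\sigma(u) := e^{-f(u)H(u)}/Z(u)$ to the parallel-transport condition \eqref{eq:paralleltransportcontinuum} characterizing the continuum limit of the Gibbs quasi-static trajectory, and then invoke uniqueness to conclude that this family actually is the quasi-static trajectory.

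First, I would compute $dS(\sigma(u))/du$ directly. Since $\sigma(u)$ has Gibbs form, $S(\sigma(u))=f(u)\langle H(u)\rangle_{\sigma(u)} + \log Z(u)$. Differentiating $\log Z(u)=\log\tr\, e^{-f(u)H(u)}$ one gets $(\log Z)'=-f'\langle H\rangle - f\langle H'\rangle$, while $\langle H\rangle'=\tr(\sigma' H)+\langle H'\rangle$. All terms proportional to $f'$ and to $\langle H'\rangle$ cancel, leaving the clean identity
\begin{equation}\label{eq:dSdu-identity}
\frac{d S(\sigma(u))}{du}= f(u)\,\tr\!\bigl(\sigma'(u)\,H(u)\bigr).
\end{equation}
This is just the differential first law specialized to Gibbs states. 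The hypothesis $S(\sigma(u))=S(\rho(0))$ forces the left-hand side to vanish, and since $f(u)\neq 0$ by assumption, one deduces $\tr(\sigma'(u)H(u))=0$ for all $u$. But this is exactly the single-conserved-quantity instance of the parallel-transport condition \eqref{eq:paralleltransportcontinuum} that characterizes the continuum limit of the recursion \eqref{eq:i-thgibbs} with constraint $\tr(\omega^{(m-1)}_{\mathrm{Gibbs}}H^{(m)})=\tr(\omega^{(m)}_{\mathrm{Gibbs}}H^{(m)})$.

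Now I would run a uniqueness argument. The quasi-static Gibbs trajectory is a curve $\omega_{\mathrm{Gibbs}}(u)=e^{-\beta(u)H(u)}/Z(u)$ determined by the initial condition $\beta(0)$ and the ODE obtained by imposing $\tr(\omega'(u)H(u))=0$. Writing this as an ODE for $\beta(u)$ via the chain rule, the coefficient of $\beta'$ is $\tr(H^2\omega)-\langle H\rangle^2=\mathrm{Var}_\omega(H)>0$ (strict as long as $H(u)$ is not a multiple of the identity), so the ODE is non-degenerate and, by smoothness of $u\mapsto H(u)$, admits a unique solution with the given initial datum. The candidate $\sigma(u)$ is itself of Gibbs form with $\sigma(0)=\rho(0)=\omega_{\mathrm{Gibbs}}(0)$ (using $f(0)=\beta(0)$) and, by the previous paragraph, satisfies the same ODE. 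Hence $\sigma(u)=\omega_{\mathrm{Gibbs}}(u)$ throughout $[0,1]$, and in particular $S(\omega_{\mathrm{Gibbs}}(u))=S(\sigma(u))=S(\rho(0))$, which is the claim.

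The one delicate point will be the uniqueness step: the ODE for $\beta(u)$ could in principle blow up along the trajectory (as in the two-level example in the main text, where $\beta\to\infty$ near a ground-state degeneracy), which would obstruct patching the solutions together on all of $[0,1]$. The assumption that $f(u)$ is smooth and finite with $f(u)\neq 0$ on the whole interval is precisely what rules this out: it guarantees that a well-behaved solution exists globally, and then the monotonicity of the energy-vs-inverse-temperature map $\beta\mapsto\langle H(u)\rangle_{e^{-\beta H(u)}/Z(u)}$ gives uniqueness. I expect the remaining arguments to be routine thermodynamic manipulations.
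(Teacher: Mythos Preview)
Your proof is correct and follows essentially the same approach as the paper. Both arguments compute $\frac{dS(\sigma(u))}{du}=f(u)\,\tr(\sigma'(u)H(u))$, use $f(u)\neq 0$ to deduce the parallel-transport condition \eqref{eq:paralleltransportcontinuum}, and then conclude $\sigma(u)=\omega_{\text{Gibbs}}(u)$; the paper simply asserts this last identification, whereas you supply the ODE-uniqueness argument (via positivity of $\mathrm{Var}_\omega(H)$) that justifies it.
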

\begin{proof}Defining the family of states
\begin{equation}
	\Omega_f(u):=\frac{e^{-f(u) H(u)} }{Z},
\end{equation}	
Lemma \ref{lemma:gibbsentropyproduction} can be shown by noting that eq.\ \eqref{eq:entropyconstant_lemma} implies that
\begin{equation}
\frac{\d S (\omega_f) }{\d u}= f(u) \tr\left(\frac{\d \Omega_f(u)}{\d u}H(u)\right) = 0.
\end{equation}
Taking the equality at the r.h.s., one sees that the state $\omega_f(u)$ fulfils condition \eqref{eq:paralleltransportcontinuum} and hence, $\Omega_f(u)=\omega(u)$ and in turn, $S(\omega(0))=S(\omega(u))$. In other words, any function $f(u)$ that---playing the role of the inverse temperature $\beta(u)$---keeps the entropy constant, will also fulfill the energy conservation condition given by \eqref{eq:paralleltransportcontinuum}, so that $f(u)=\beta(u)$. 
\end{proof}

Lemma \ref{lemma:gibbsentropyproduction} can be used to answer whether there is entropy production 
given a quasi-static process defined by $H(u)$ with $0\le u\le 1$ and initial state $\omega_{\text{Gibbs}}(0)$. We provide now two examples.

\begin{result}\label{example1} Let us refer to the ground state degeneracy of a Hamiltonian $H$ as $d_{\text{g}}(H)$. {Consider an initial and final Hamiltonian $H(0)$ and $H(1)$ such that $d_{\text{g}}(H(0))\geq d_{\text{g}}(H(1))$ and initial state $\omega_{\text{Gibbs}}(0)=e^{-\beta(0)H(0)}/Z(H(0))$ with $\beta(0)>0$. Then, \emph{any} quasi-static trajectory $H(u)$ that satisfies  $d_{\text{g}}(H(0))\geq d_{\text{g}}(H(u)) \geq d_{\text{g}}(H(1))$ for all $u\in[0,1]$ will keep the entropy constant.}
\end{result}

\begin{proof} {First, let us invoke the fact I) that for any Hamiltonian $H$ and any entropy $S\in (\log d_{\text{g}}, \log D)$, there is a finite $\beta_S$ such that the Gibbs state of inverse 
temperature $\beta_S$ has entropy $S$. Now, let us consider the premise given above of a trajectory $u\mapsto H(u)$, so that the ground state degeneracy satisfies $d_{\text{g}}(H(0))\geq d_{\text{g}}(H(u)) \geq d_{\text{g}}(H(1))$ for all $u\in[0,1]$. 
This implies that can choose a function $u\mapsto f(u)$ 
such that $S(\Omega_{f}(u))=S(\omega(0)_{\text{Gibbs}})$ for all $u\in[0,1]$. That this is the case can be seen at $u=0$ just using that $\beta(0)>0$ and hence, the entropy of the initial state is at least $\log (d_g(H(0))$. Hence, it lies within the limits where fact I) applies. For any other $u>0$ we just apply the same reasoning and the premise that $d_{\text{g}}(H(0))\geq d_{\text{g}}(H(u)) \geq d_{\text{g}}(H(1))$ for all $u\in[0,1]$. Since the path of Hamiltonians is smooth, it follows that the function $f$ 
is also smooth.}

Lastly, by Lemma $\ref{res:gibbsentropyproduction}$ this function satisfies 
$f(u)=\beta(u)$, where $\beta(u)$ is the inverse temperature of the quasi-static process. Hence, such a process keeps the entropy constant.
\end{proof}

\begin{result} [Formal version of Resut \ref{res:gibbsentropyproduction} in the main text] \label{example2} Consider an initial and final Hamiltonian $H(0)$ and $H(1)$ and initial state $\omega_{\text{Gibbs}}(0)=e^{-\beta(0) H(0)}/Z$ with finite $\beta(0)>0$. If there exist finite $\beta^{*}>0$ so that $S(\omega_{\text{Gibbs}}(0))=S(e^{-\beta^* H(1)}/Z)$, then any quasi-static trajectory 
$H(u)$ with $d_g(H(u))=1$ for all $u$ in the open interval $u\in (0,1)$, is such there is no entropy production.
\end{result}
\begin{proof} Using the same argument as in the proof of Example \ref{example1}, we find that thermal states of non-degenerate Hamiltonians can take any entropy between $0$ and $\log d$. This implies that we can then find a smooth function $f(u)$ such that $S(\Omega_{f(u)})=S(\omega_{\text{Gibbs}}(0))$ for all $u<1$. But since we assume that a suitable $\beta^*$ exists we can smoothly rescale the Hamiltonians along the trajectory to make sure that $f(1)=\beta^*$, obtaining $S(\Omega_{f(u)})=S(\omega_{\text{Gibbs}}(0)$ for all $u\in[0,1]$.
This ensures by Lemma \ref{lemma:gibbsentropyproduction} that such quasi-static processes exhibit no entropy production.
\end{proof}

\section{Optimal protocols for work extraction with Gibbs ensembles}\label{sec:gibbsworkextraction}

\subsection{The case of unrestricted Hamiltonians}\label{sec:gibbsunrestricted}
First we  consider an idealised scenario where one has full control over the global Hamiltonian $H$. That is, the Hamiltonians $H^{(i)}$ at the $i$-th step of the protocol can be chosen to be any Hamiltonian. Given this maximal level of control, we would like to identify the optimal protocol for work extraction.

We have initial conditions described by a pair of state and Hamiltonian $\rho^{(0)}$ and $H^{(0)}$ respectively. The goal is to extract work by performing a \emph{cyclic} protocol, where $H^{(N)}=H^{(0)}$. Importantly, we will no longer assume that the initial state is in a Gibbs state with respect to the initial Hamiltonian $H^{(0)}$.
After the first quench, the state thermalises to $\omega_{\text{Gibbs}}^{(1)}=\Omega_{\text{Gibbs}}(\rho^{(0)},H^{(1)})$. Hence, from that moment onwards, the minimum work principle can be applied implying that it is optimal to come back to $H^{(0)}$ by a protocol that does not create entropy. 

The only remaining question concerning the optimal protocol is 
to which  Hamiltonian $H^{(1)}$ the first quench is to be performed. This can be straightforwardly answered by expressing the total work, as in  \eqref{eq:workinitialfinal},
\begin{equation} \label{WextGibbs}
W=\tr\left((\rho^{(0)}-\omega_{\text{Gibbs}}^{(N)}
)
H^{(0)}\right),
\end{equation}
where by eq.\ \eqref{eq:i-thgibbs}, we see that $\omega_{\text{Gibbs}}^{(N)}$ is a Gibbs state with inverse temperature $\beta^{(N)}$. Arguing in
the same way as in the minimum work principle, we obtain that the optimal protocol is the one which has no entropy production. Note that a protocol creating zero entropy is only possible for initial states $\rho^{(0)}$ such that $S(\rho^{(0)})=S(e^{-\beta^{(N)} H^{(0)}}/Z)$ for some $\beta^{(N)}>0$, as discussed in Result \ref{res:gibbsentropyproduction}. Here we provide the steps of a protocol that achieves zero entropy production if that condition is met, which is, as discussed above, the protocol that extracts the maximum amount of work. This protocol reads:

\begin{enumerate}
\item Apply first a quench from $H^{(0)}$ to $H^{(1)}=k\ln( \rho^{(0)})$ for any $k \in \mathbb{R}^{-}$.
\item Let the system equilibrate to $\omega_{\text{Gibbs}}^{(1)}:=\gibbs(\rho^{(0)},H^{(1)})$ given by \eqref{eq:i-thgibbs}. The condition of average energy conservation implies that $\beta^{(1)}=-1/k$, and 
thus, $\omega_{\text{Gibbs}}^{(1)}=\rho^{(0)}$.

\item Apply a quasi-static process (a sequence of infinitesimal quenches and equilibrations) from $H^{(1)}$ to $H^{(0)}$. Such process keeps the entropy constant $S(\rho^{(1)})= S(\rho^{(N)})$, as discussed in Sec.\ \ref{sec:entropyproduction}.
\end{enumerate}
This protocol resembles the optimal protocol of work extraction for the model of equilibration of eq.\ \eqref{eq:thermalequilibration} \cite{Esposito2011,Aberg2013a}; however, 
the first quench is chosen to a different Hamiltonian $H^{(1)}$. 

%As the entropy remains constant in the optimal protocol, one can easily express the optimal work in terms of free energies, finding
%\begin{equation}\label{eq:finalworkunrestricted}
%W^{\text{opt}}_{\text{Gibbs}}=\Delta F_{\beta^*} (\rho^{(0)},H^{(0)})
%\end{equation}
%where $\beta^*$ is the inverse temperature such that 
%$S(\rho^{(0)})=S(\Omega_{\beta^*}(H^{(0}))$ and 
%\begin{equation}
%\Delta F_{\beta^*}(\rho,H)=F_{\beta^*}(\rho, H)-F_{\beta^*} (\Omega_{\beta^*}(H^{(0)}),H^{(0)})
%\end{equation}
%with $F_{\beta}(\rho,H)=\tr(\rho H)-\beta^{-1}S(\rho)$ the von Neumann non-equilibrium free energy. It is insightful to compare \eqref{eq:finalworkunrestricted} with the work bound for equilibration of the form \eqref{eq:thermalequilibration}, 
%which is given by the same expression as \eqref{eq:finalworkunrestricted} but with a fixed $\beta$ taking the place of $\beta^{*}$ \cite{Esposito2011}. 
%Thus, the optimal work under $\Omega_{\text{Gibbs}}$ can be described by 
%a temperature corrected version of the usual model given by $\Omega_\beta$. 
%It is far from obvious that this is the case, since in the former model the effective inverse temperature $\beta^{(m)}$ is not even constant in the optimal protocol. 

\subsection{Work extraction with restricted Hamiltonians and Gibbs ensembles}
\label{sec:gibbsrestricted}
 We now consider the restricted case where $H^{(i)} \in \mc{H}$ and $\mc{H}$ is a given set of Hamiltonians. While we will later be interested in the case where restriction are
  such that we can only change the initial Hamiltonian \emph{locally} on the subsystem $S$, 
  so that 
  \begin{equation}
  \mc{H}_{\text{Local}}=\{ H \: \: | \: \: H=H^{(0)}+H_{S} \otimes \id_{B} \}, 
  \end{equation}
  we will keep the discussion completely general.

In the same way as in the case of unrestricted Hamiltonians, a maximum amount of work will be extracted if we minimise the final  
energy, as expressed by eq.~\eqref{eq:workinitialfinal}. Since the  final state is by assumption a Gibbs state, it is therefore optimal to end up with a Gibbs state with minimal possible positive temperature (every state with negative temperature has higher  energy than all states with positive temperature). This is clearly possible if the initial state already has an effective positive temperature with respect to any Hamiltonian in $\mc{H}$.  We will assume from now on that this is the case.

Considering steps $1.-3.$ of protocol in
Sec.\ \ref{sec:gibbsunrestricted},  one can easily see that  step $1.$
cannot  be applied  if $k\ln  (\rho^{(0)})\notin \mc{H}$.  Instead, we
will make a quench $H^{(0)} \mapsto H^{(1)}$ with
\begin{equation}
H^{(1)}=\mathrm{argmin}_{H\in\mc{H}} \: \:S\left(\Omega_{\text{Gibbs}(\rho^{(0)},H)}\right),
\end{equation}
while steps $2.-3.$ are not modified by the restrictions on
$\mc{H}$.

\section{Optimal protocol of work extraction for time average equilibration and unrestricted	 Hamiltonians}\label{sec:apptimeaverageopt}

We  now construct an explicit protocol that saturates the bound
\begin{equation}
W \leq \tr(\rho^{(0)} H^{(0)}) - \tr(\omega_{\text{TA}}^{*} H^{(0)})
\end{equation}
in the limit of $N\rightarrow \infty$, where $N$ is the number of quenches performed. Here, $\omega_{\text{TA}}^*$ is a state with the following properties: i) it has the same eigenvalues as $\rho^{(0)}$, ii) it is diagonal in the basis of $H^{(0)}$, iii) it is passive, i.e., its eigenvalues are ordered in non-increasing order with increasing energy. Given the initial state $\rho^{(0)}$, let us denote by $U$ the unitary that diagonalises the initial state, such that $U\rho^{(0)}U^{\dagger}=D$. The first step of the protocol is to make a quench $H^{(0)} \mapsto H^{(1)}$ with $H^{(1)}=U^{\dagger}H^{(0)}U$. Since $\rho^{(0)}$ is diagonal in the eigenbasis of $H^{(1)}$, it follows that the first equilibration process to the time averaged state will not alter the state, that is, $\omega_{\text{TA}}^{(1)}=\rho^{(0)}$. The second step is to perform $N/2$ quenches (followed each by an equilibration process) in a given trajectory from $H^{(1)}$ back to the initial Hamiltonian $H^{(0)}$. Note that in the limit of $N \rightarrow \infty$ this is a quasi-static process, thus the state $\omega_{\text{TA}}^{(N/2)}$ is diagonal with respect to $H^{(0)}$ and with the same eigenvalues as $D$. The next step is to find some unitary $V$ that orders the eigenvalues of $\omega_{\text{TA}}^{N/2}$, in such a way that we have
\begin{equation}
	V\omega_{\text{TA}}^{N/2} V^{ \dagger}=\omega_{\text{TA}}^*:=\sum_k (\rho^{(0)})^{\downarrow}_k P^{(0)}_k,
\end{equation}	
where $({\rho^{(0)}})^\downarrow_k$ denotes the list of eigenvalues of $\rho^{(0)}$ ordered in a non-increasing manner with increasing energy and the $P^{(0)}_k$ are the energy-eigenprojectors of $H^{(0)}$.
As in the previous step, now first perform a quench to $H^{(N/2+1)}=V^{\dagger} H^{(0)} V$ and return to $H^{(N)}=H^{(0)}$ in a quasi-static process, so that in the limit of $N\rightarrow \infty$ we obtain $\omega_{\text{TA}}^{N} = \omega_{\text{TA}}^*$.

\section{Work extraction with time-average equilibration}\label{app:protocol_qs_topassive}

In this section we present the optimal protocol of work extraction between an initial and final Hamiltonian $H(0)$ and $H(1)$ respectively, from an initial state $\omega_{\text{TA}}(0)$. This protocol consists on the quasi-static realisation of the following trajectory $H(u)$: Let us denote the initial Hamiltonian as $H(0)=\sum_i E_i(0) \ketbra{E_i(0)}{E_i(0)}$ and equivalently for the final $H(1)$. Let us assume no degenerate eigenspaces for simplicity (the generalisation to the case with degenerate subspaces is straightforward) so that the initial state is simply given by $\omega_{\text{TA}}(0)=\sum_i p_i \ketbra{E_i(0)}{E_i(0)}$. Then, the quasi-static realisation of the following trajectory of Hamiltonian leaves the final state $\omega_{\text{TA}}^{\text{q.s.}}$ with the same spectrum and passive with respect to $H(1)$:
\begin{enumerate}
\item Change the eigenvalues smoothly from $\{E_i(0)\}_i$ to $\{E_i(u_1)\}_i$ while leaving the eigenstates invariant. Note that in this part of the protocol the state remains also invariant, so that $\omega_{\text{TA}}(u_1)=\omega_{\text{TA}}(0)$. The final eigenvalues $E_i(u_1)$ are chosen so that $\omega_{\text{TA}}(u_1)$ is passive with respect to $H(u_1)=\sum_i E_i(u_1)\ketbra{E_i(0)}{E_i(0)}$ and that the spectrum of $H(u_1)$ coincides with the one of $H(1)$.
\item Given the conditions on the spectrum of $H(u_1)$ and $H(1)$, one can identify $E_j(1)=E_i(u_1)$. In this second part of the protocol we define a smooth trajectory from $u_1$ to $u_2$ where only the eigenvectors are changed as $\ket{E_i(u_1)} \rightarrow \ket{E_i(u_2)}=\ket{E_j(u_2)}$. By definition, after this second step the final Hamiltonian $H(u_2)$ is indeed the desired final Hamiltonian so that $H(u_2)=H(1)$. Also, this second step from $u_1$ to $u_2$ keeps the state passive, so that the final state $\omega_{\text{TA}}(u_2)$ is passive with respect to the desired final Hamiltonian.
\end{enumerate}
However, note that this protocol requires global control over the Hamiltonians. Once we can only control some part of the Hamiltonian, all the available smooth trajectories might lead to a non-passive final state in the quasi-static realisation, so that it can become beneficial to use a protocol with a finite number of quenches which results in entropy-production.

\section{Free fermionic systems}
\subsection{Correlation matrices, time evolution, and entropy}
\label{FreeFermionsI}
We consider Hamiltonians of the type
\begin{equation}
H=\sum_{i,j}c_{i,j} a_i^{\dagger} a_j
\label{initialH}
\end{equation}
where the operators $a_i, a_i^{\dagger}$ satisfy the fermionic anti-commutation relations,
\begin{align}
\{a_i,a_j^{\dagger}\}&=\delta_{i,j} , \\
\{a_i,a_j\}&=\{a_i^{\dagger},a_j^{\dagger}\}=0.
\label{eq:CommRelations}
\end{align}
Since the matrix $c$ in \eqref{initialH} is Hermitian, it can be diagonalised by a unitary operator, $c=A D A^{\dagger}$, where $A A^{\dagger}=1 $ and $D={\rm diag}\{\epsilon_1,\dots,\epsilon_n \}$. The Hamiltonian then can be expressed as,
\begin{equation}
H=\sum_k \epsilon_k \eta_k^{\dagger} \eta_k, 
\end{equation}
with 
\begin{align}
\label{etas}
&\eta_k =\sum_{j}A_{j,k}^{*} a_j ,\\
&\eta_k^{\dagger}=\sum_{j}A_{j,k} a_j^{\dagger}.
\end{align}
The fermionic operators $\eta_k^\dagger,\eta_k$ are usually referred to as \emph{normal modes}.
The unitarity of $A$ ensures that the transformation preserves the commutation relations,
\begin{eqnarray}
\{\eta_k,\eta_l^{\dagger} \}=\sum_{i,j}A_{k,i} A^{*}_{l,j} \{a_i,a_j^{\dagger}\} =\delta_{k,l}.
\end{eqnarray}
where we used \eqref{eq:CommRelations}.

In the following, we will describe states within the framework of correlation matrices.  Define the entries of the correlation matrix $ \gamma(\rho)$ corresponding to $\rho$ as
\begin{equation}
\gamma_a \left( \rho \right)_{i,j}={\rm Tr} ( a_i^{\dagger} a_j \rho ) .
\label{appgammai,j}
\end{equation}
Notice that the diagonal elements represent the occupation probabilities, or populations, of each physical fermion. The correlation matrix in the diagonal basis $\gamma_{\eta} ( \rho)_{i,j}={\rm Tr} ( \eta_i^{\dagger} \eta_j \rho ) $ is related to $\gamma_a$ through  $\gamma_{\eta}=A^{T}\gamma_{a} A^{*}$. The diagonal elements of $\gamma_{\eta}$, corresponding to the populations of the normal modes, play an important role, and we denote them by $p_k$,
\begin{equation}
p_k = {\rm Tr} ( \eta_k^{\dagger} \eta_k \rho ).
\end{equation}
The time evolution of $\gamma(\rho)$ under $H$, $\rho(t)=e^{-\ii Ht}\rho e^{\ii Ht}$, can be easily computed in the Heisenberg picture,
\begin{eqnarray}
\dot{\eta_k}&=&\ii [H,\eta_k]=-\ii E_k \eta_k, \\
\eta_k(t)&=& e^{-\ii E_kt}\eta_k,
\end{eqnarray}
where we have used $\{\eta_i,\eta_j^{\dagger}\}=\delta_{i,j}$ and $\eta_k^2=0$. Therefore, on the one hand,
it follows that
\begin{equation}
\gamma_{\eta} (\rho(t))=e^{\ii tD} \gamma_{\eta} (\rho) e^{-\ii tD}
\end{equation}
with $D={\rm diag}\{E_1,\dots,E_n \}$. In the original basis it reads, 
\begin{equation}
\gamma_a (\rho(t))=U \gamma_a(\rho) U^{\dagger}
\end{equation}
with $U=A^{*}e^{\ii tD}A^{T}$. 
On the other, the time averaged state, which is defined as, 
\begin{equation}
\langle \rho \rangle_t = \lim_{T\rightarrow \infty} \frac{1}{T} \int_0^{T} \rho(t),
\end{equation}
is represented simply by
\begin{equation}
\gamma_{\eta} (\langle \rho \rangle_t) =\langle \gamma_{\eta} (\rho) \rangle_t= \Gamma \left[ \gamma_{\eta} (\rho(t))  \right],
\end{equation}
where $\Gamma$ corresponds to a de-phasing operation. In fact, this correlation matrix is the same as the one of the GGE
where the conserved quantities are the normal modes $\eta_k^{\dagger} \eta_k$, i.e., 
\begin{equation}
\gamma \left(\gge(\rho,H,\{\eta^\dagger_k \eta_k\})\right)= \gamma\left(\average{\rho}_t\right).
\label{eq:GGETimeAverage}
\end{equation}
Note however, that this does not imply that the full quantum state of the GGE is the same as the time-averaged state.

A particularly important class of fermionic states is given by \emph{Gaussian states}. In this situation of fixed particle number, 
such Gaussian states are completely determined from the correlation matrix. In particular, eigenstates and thermal states of free 
fermionic Hamiltonians are Gaussian states, but clearly also the GGEs given above, as they are obtained by maximizing the entropy given the expectation values of the operators $\eta^\dagger_k\eta_k$.

If a state $\rho$ is Gaussian, the entropy of $\rho$ can be calculated as
\begin{equation}
S(\rho)=\sum_k \mathrm{H}(d_k) ,
\end{equation}
where $d_k$ are the eigenvalues of $\gamma(\rho)$, and $\mathrm{H}(p)=-p \ln p -(1-p) \ln (1-p)$. This fact allows us to study entropy-production numerically for large systems.

\subsection{Work extraction for free fermions}
\label{FreeFermionsIII}
Here we find a bound for work extraction protocols, which, as discussed in the main text, is equivalent to finding a lower bound on the final energy, $\tr{(\omega_{\text{GGE}}^{(N)}H^{(0)})}$, with $H^{(0)}=\sum_k \epsilon_k^{(0)} \eta^{\dagger}_k \eta_k^{(0)}$.
From our considerations in Section \ref{FreeFermionsI}, it follows that under the joint operation of a quench, 
\begin{equation}
H^{(i)}=\sum_{i,j} c_{i,j}^{(i)} a^{\dagger}_i a_j \mapsto H^{(i+1)}=\sum_{i,j} c_{i,j}^{(i+1)} a^{\dagger}_i a_j, 
\end{equation}
followed by an equilibration process, our effective description in terms of GGE states takes the form
\begin{equation}
 \gamma_a(\omega_{\text{GGE}}^{(i+1)})=A_{(i+1)}^{*} \Gamma\left[ A_{(i+1)}^{T} \gamma_a(\omega_{\text{GGE}}^{(i)}) A_{(i+1)}^{*}      \right] A_{(i+1)}^{T}
 \label{apptransf}
\end{equation}
where $\Gamma$ is a de-phasing operation, and $c^{(i+1)}=A_{i+1} D A^{\dagger}_{i+1}$, with $D$ a diagonal matrix. Let $\{d_k^{(i+1)}\}$ and $\{d_k^{(i)}\}$ 
be the eigenvalues of $\gamma_a(\omega_{\text{GGE}}^{(i+1)})$ and $\gamma_a(\omega_{\text{GGE}}^{(i)})$, respectively. Under \eqref{apptransf}, they are related through a doubly stochastic matrix,
\begin{equation}
d_k^{(i+1)}=\sum C_{k,l} d_l^{(i)}
\end{equation}
with $\sum_k C_{k,l}=\sum_l C_{k,l}=1$. Therefore, the eigenvalues of the final state $\gamma_a(\omega_{\text{GGE}}^{(N)})$ can also be expressed as a stochastic combination of the eigenvalues of $\gamma_a(\rho^{(0)})$, $\{d_k^{(0)}\}$. It now follows from basic notions of the theory of majorisation that,
\begin{equation}
\tr{(\omega_{\text{GGE}}^{(N)}H^{(0)})} \geq \sum_{k=1}^{n} (d_k^{(0)})^{\downarrow} (\epsilon_k^{(0)})^{\uparrow}=\tr{(\omega_{\text{GGE}}^{*}H^{(0)})} 
\label{appwgge*}
\end{equation}
where $^{\uparrow}$ and $^{\downarrow}$ reflect lists ordered in increasing (decreasing) order. This provides the bound \eqref{pord}. 

We now construct an explicit protocol that achieves this bound in the limit $N\rightarrow \infty$, where $N$ is the number of quenches performed. Let $\gamma_a(\rho^{(0)})$ be the correlation matrix of $\rho^{(0)}$ as in \eqref{appgammai,j}. First, find some $U$ 
 that diagonalises  $\gamma(\rho^{(0)})$, 
 \begin{equation}
 	U\gamma(\rho^{(0)})U^{\dagger}=D, 
\end{equation}
and make a quench to 
 \begin{equation}
H^{(1)}=U^{T} H^{(0)} U^{*}. 
\end{equation}
Since $\gamma(\rho^{(0)})$ is diagonal in the new basis, it follows that $\omega_{\text{GGE}}^{(1)}=\rho^{(0)}$, i.e., the state is not changed during the equilibration process. Now, slowly rotate back to the original Hamiltonian, by performing $N/2$ quenches (followed by equilibration processes) until  $H^{(0)}$ is reached. At the end the state, $\rho^{(N/2)}$ is (approximately) diagonal with respect to the original Hamiltonian,  $H^{(0)}$. Next, find some $V$ that order the populations of $\gamma(\rho^{(N/2)})$, so that $V \gamma(\rho^{(N/2)})V^{\dagger}$ satisfies \eqref{pord}. As before, perform a quench to 
 \begin{equation}
H^{(1)}=V^{T} H^{(0)} V^{*}, 
\end{equation}
and slowly come back to the original Hamiltonian by performing $N/2$ quenches. This process 
give rise to the desired  final state $\omega_{\text{GGE}}^{*}$ in the limit of infinitesimally slow transformations, i.e., in the limit $N \rightarrow \infty$.  The optimal protocol is therefore reversible, and it agrees with our intuition that slow processes are better for work extraction. 
 
Importantly, note that these results for the free fermions are completely analogue to the case of time average equilibrium state, as detailed in Sec.\ \ref{sec:timeaveragework}. Indeed, the optimal final state resembles a passive state, which is the optimal final state for work extracting protocols using time-averaged states. However, it should be stressed that the GGE equilibration model considered for free fermions does not coincide in general with the time averaged state. Indeed, this difference can be highlighted by looking at the final state obtained for the time average model in comparison with the final state of the GGE equilibration for free fermions. In the former, one ends up with a passive state. This implies, for $n$ fermions, $2^n$ energy populations decrease with the energy. On the other hand, for the GGE model of equilibration considered here, the final state $\omega^*_{\text{GGE}}$ is such only the $n$ populations of the normal modes need to be in decreasing order. These two states are in general not the same.

%The considerations of this section are reminiscent of the theory of passive states \ma{References missing}. Passive states $\sigma$ are states that satisfy,
%\begin{equation}
%\tr(\sigma H)\leq \tr(U\sigma U^{\dagger} H) \hspace{5mm} \forall U
%\end{equation}
%where $U$ is a unitary operation. When considering cyclic transformations in thermally isolated systems, passive states correspond to optimal final states in work extraction protocols \ma{(cite Theo)}. Given $H=\sum_i E_i \ket{i} \bra{i} $, with $E_{i+1} \geq E_i$, passive states take the form,
%\begin{equation}
%\sigma=\sum_i p_i^{\downarrow} \ket{E_i} \bra{E_i}, 
%\end{equation}
%The energy of a passive state thus reads,
%\begin{equation}
%\tr{H \sigma}= p_i^{\downarrow} E_i^{\uparrow}
%\end{equation} 
%in clear analogy with \eqref{appwgge*}.

For example, consider a three-fermion system with Hamiltonian 
\begin{equation}
H=\epsilon_1 \eta^{\dagger}_1 \eta_1+\epsilon_2 \eta^{\dagger}_2 \eta_2+\epsilon_3 \eta^{\dagger}_3 \eta_3 
\end{equation}
and a state $\rho$ with $\tr(\eta^{\dagger}_i \eta_i \rho)=p_i$ with $i=1,2,3$. The quantum state $\rho$ and $H$ can be written as
\begin{align}
H=&{\rm diag} \{0,\epsilon_1,\epsilon_2,\epsilon_3,\epsilon_1+\epsilon_2,\epsilon_2+\epsilon_3,\epsilon_1+\epsilon_3,\epsilon_1+\epsilon_2+\epsilon_3 \}
\nonumber\\
\rho=&{\rm diag} \{(1-p_1)(1-p_2)(1-p_3),p_1(1-p_2)(1-p_3),\nonumber\\
&p_2(1-p_1)(1-p_3),p_3(1-p_1)(1-p_2), p_2 p_1(1-p_3),\nonumber\\
&p_2 p_3(1-p_1),p_1 p_3(1-p_2),p_1 p_2 p_3 \}.
\end{align}
If we now choose $\epsilon_1=1$, $\epsilon_2=2$, $\epsilon_3=2.5$; and $p_1=0.4$, $p_2=0.3$, and $p_3=0.1$; we obtain that $\rho$ is not passive but has the form of $\omega_{\text{GGE}}^{*}$.
The origin of the difference is the set of operations in which every state is defined. Passive states arise as optimal states for work extraction protocols if any unitary operation can be performed to the system, or, equivalently, every cyclic process in which the system remains thermally isolated. On the other hand, states $\omega_{\text{GGE}}^{*}$ become optimal when the set of operations corresponds to (arbitrary) quenches to quadratic Hamiltonians, which is in general more constraint that the set of unitary operations.  Within this constrained set of operations, they become optimal.
\clearpage
\end{document}